\newtheorem{theorem}{Theorem}[section]
\theoremstyle{definition}
\newtheorem{definition}[theorem]{Definition}
\newtheorem{remark}[theorem]{Remark}
\newtheorem{assumption}[theorem]{Assumption}
\newtheorem{example}[theorem]{Example}
\numberwithin{equation}{section}
\definecolor{Gray}{gray}{0.9}
\newcommand{\ind}{1\hspace{-2.1mm}{1}}
\newcommand{\I}{\mathtt{i}}
\newcommand{\CC}{\mathbb{C}}
\newcommand{\RR}{\mathbb{R}}
\newcommand{\PP}{\mathbb{P}}
\newcommand{\EE}{\mathbb{E}}
\newcommand{\Lm}{\mathrm{L}}
\newcommand{\D}{\mathrm{d}}
\newcommand{\Oo}{\mathcal{O}}
\newcommand{\Dd}{\mathcal{D}}
\newcommand{\Cc}{\mathcal{C}}
\newcommand{\Ll}{\mathcal{L}}
\newcommand{\la}{\left\langle}
\newcommand{\ra}{\right\rangle}
\newcommand{\dt}{\partial_{t}}
\newcommand{\dxx}{\partial^2_{x}}
\newcommand{\dxom}{\partial_{x,\omega}}
\newcommand{\dx}{\partial_{x}}
\newcommand{\dOm}{\partial_{\Om}}
\newcommand{\dom}{\partial_{\omega}}
\newcommand{\domom}{\partial^{2}_{\omega}}
\newcommand{\E}{\mathrm{e}}
\newcommand{\Ps}{\mathrm{P}}
\newcommand{\PHat}{\widehat{\Ps}}
\newcommand{\QQ}{\mathbb{Q}}
\newcommand{\Xm}{\mathbf{X}}
\newcommand{\Zm}{\mathbf{Z}}
\newcommand{\Bm}{\mathbf{B}}
\newcommand{\Wm}{\mathbf{W}}
\newcommand{\Kr}{\mathrm{K}}
\newcommand{\lr}{\mathfrak{C}}
\newcommand{\Lr}{\mathfrak{L}}
\newcommand{\bm}{\mathbf{b}}
\newcommand{\Om}{\boldsymbol \omega}
\newcommand{\dist}{\boldsymbol \D}
\newcommand{\TTheta}{\boldsymbol \Theta}
\newcommand{\psib}{\boldsymbol \psi}
\newcommand{\varpb}{\boldsymbol \varphi}
\newcommand{\sigl}{\sigma_{L}}
\newcommand{\Sigb}{\boldsymbol \Sigma}
\newcommand{\sm}{\boldsymbol \sigma}
\newcommand{\ddelta}{\boldsymbol \delta}
\newcommand{\wwb}{\boldsymbol{\mathrm{w}}}
\newcommand{\kkappa}{\boldsymbol{\kappa}}
\newcommand{\ot}{\otimes^{t}}
\newcommand{\Aa}{\mathcal{A}}
\newcommand{\Ff}{\mathcal{F}}
\newcommand{\Dk}{\mathfrak{D}}
\newcommand{\Ik}{\mathfrak{I}}
\newcommand{\Fk}{\mathfrak{F}}
\newcommand{\rrho}{\overline{\rho}}
\newcommand{\Lo}{\overline{\Lambda}}
\newcommand{\eps}{\varepsilon}
\newcommand{\Th}{\thetab}
\newcommand{\thetab}{\boldsymbol{\theta}}
\newcommand{\half}{\frac{1}{2}}
\begin{document}

\title{Deep Curve-dependent PDEs for affine rough volatility}
\author{Antoine Jacquier}
\address{Department of Mathematics, Imperial College London, and Alan Turing Institute}
\email{a.jacquier@imperial.ac.uk}
\author{Mugad Oumgari}
\address{Lloyds Banking Group plc, Commercial Banking, 10 Gresham Street, London, EC2V 7AE,~UK}
\email{Mugad.Oumgari@lloydsbanking.com}
\thanks{The views and opinions expressed here are the authors' and do not represent the opinions of their employers. 
They are not responsible for any use that may be made of these contents. No part of this
presentation is intended to influence investment decisions or promote any product or service.
The authors would like to thank Lukasz Szpruch and Bernhard Hientzsch for stimulating discussions, as well as the referees 
and the Associate Editor for their insightful comments.}
\keywords{rough volatility, Deep learning, Path-dependent PDEs}
\subjclass[2010]{35R15, 60H30, 91G20, 91G80}
\date{\today}

\begin{abstract}
We introduce a new deep-learning based algorithm to evaluate options in affine rough stochastic volatility models.
Viewing the pricing function as the solution to a curve-dependent PDE (CPDE), 
depending on forward curves rather than the whole path of the process, 
for which we develop a numerical scheme based on deep learning techniques.
Numerical simulations suggest that the latter is a promising alternative to classical Monte Carlo simulations.
\end{abstract}

\maketitle

\section{Introduction}
Stochastic models in financial modelling have undergone many transformations since the Black and Scholes 
model~\cite{Black}, and its most recent revolution, pioneered by Gatheral, Jaisson and Rosenbaum~\cite
{Volrough}, has introduced the concept of rough volatility.
In this setting, the instantaneous volatility is the solution to a stochastic differential equation driven by 
a fractional Brownian motion with small (less than a half) Hurst exponent, synonym of low H\"older regularity 
of the paths.
Not only is this feature consistent with historical time series~\cite{Volrough}, but it further allows to 
capture the notoriously steep at-the-money skew of Equity options, as highlighted in~\cite{AlosLeon, BFGHS, 
BFG15, Forde, Fukasawa}. 
Since then, a lot of effort has been devoted to advocating this new class of models and to showing the full 
extent of their capabilities, in particular as accurate dynamics for a large class of assets~\cite{BLP16}, and 
for consistent pricing of volatility indices~\cite{HJT18, JMM17}.
Nothing comes for free though, and the flip side of this new paradigm is its computational cost.
With the notable exception of the rough Heston model~\cite{AbiJ, EEFR, Roughening, ERHedging, ERChar}
and its affine extensions~\cite{AffineVolterra, AffineFwd}, the absence of Markovianity of the fractional 
Brownian motion prevents any pricing tools other than Monte Carlo simulations;
the simulation of continuous Gaussian processes, including fractional Brownian motion, is traditionally slow 
as soon as one steps away from the standard Brownian motion.
However, the clear superiority--for estimation and calibration--of these rough volatility models has 
encouraged deep and fast innovations in numerical methods for pricing, 
in particular the now standard Hybrid scheme~\cite{BLP15, HPV17}
as well as Donsker-type theorems~\cite{HJMu, Parczewski}, numerical approximations~\cite{Bayer1, Harms, 
Rados} and machine learning-based techniques~\cite{DeepL, Stone}.

In fact, industry practice is often entrenched, not in pure stochastic volatility models, 
but in models enhanced with a local volatility component, \`a la Dupire~\cite{DupireVol}, 
thereby ensuring an exact fit to the observed European option price surface.
The natural next step for rough volatility models is to include such a component, 
which we do here.
Pricing in such model is usually performed by simulation, but we adopt a different strategy, 
following the recent development by Viens and Zhang~\cite{Viens}, 
who proved an analogous version of the Feynman-Kac theorem for rough volatility models.
The fundamental difference is that the corresponding partial differential equation
is now path-dependent. 
This forces us to revisit classical market completeness results in the setting of 
rough local stochastic volatility models.
Path-dependent partial differential equations (PPDEs) have been studied extensively by Touzi, Zhang and co-authors~\cite{Touzi3, Touzi2, Touzi1}, and we shall draw existence and uniqueness from their works.
Despite these advances, though, very little has been developed to solve these PPDEs numerically, 
with the sole exception of a path-dependent version of the Barles and Souganidis' monotone scheme~\cite
{Barles} by Zhang and Zhuo~\cite{Zhang} and Ren and Tan~\cite{RenTan}.
The implementation thereof is however far from obvious.
In the context of rough volatility--at least for a certain subclass with a specific structure (see Assumptions~\ref{assu:Martingale}(ii) and~\ref{assu:Functional}), that includes the rough Heston and the rough Bergomi models--and for European vanilla options, we show that the pricing function depends on forward-looking curves rather than the whole path (including the past) of the process.
We therefore consider a particular subclass of path-dependent PDEs, that we call curve-dependent PDEs (CPDEs), 
for which we develop a novel algorithm based on the discretisation of the pricing CPDE, 
which we then solve using a deep-learning-based backward algorithm inspired by the tec	hnique pioneered by E, Han and Jentzen~\cite{JentzenDeep}.
Our work focuses on European option on the underlying stock price, in which case we know, mainly thanks to~\cite{Bergomi, Guyon},
that, viewed as forward variance curve models, the price is indeed related to a curve-dependent problem.
However, we show that for other options, the whole path is in fact needed, 
justifying our use of the Viens-Zhang approach~\cite{Viens}.
There is a tight link, obviously, between these SDEs and SDEs on Hilbert spaces, 
viewed as equations in infinite dimensions and we refer the interested reader to~\cite{DaPrato} for more details 
and to~\cite{Cont} for early applications thereof in the context of mathematical finance.

We note in passing that using machine learning (or deep learning) techniques to solve high-dimensional 
PDEs has recently been the focus of several approaches.
Neural networks have indeed been used to solve PDEs for a long time~\cite{Lagaris1, LeeNN};
more recently, Sirignano and Spiliopoulos~\cite{SpilioDGM} proposed an algorithm
not depending on a given mesh (as opposed to the previous literature),
thus allowing for an easier extension to the multi-dimensional case.
During the writing-up of the present paper, a further two ideas, similar in spirit to the one we are borrowing from~\cite{JentzenDeep} came to light:
Sabate-Vidales, \v{S}i\v{s}ka and Szpruch~\cite{Szpruch}, 
as well as Hur\'e, Pham and Warin~\cite{Pham}
also used the BSDE counterpart of the PDE -- albeit in different ways -- to apply machine learning techniques.
Since our set-up here is not solely about solving a high-dimensional PDE, 
we shall leave the precise comparison of these different schemes to rest for the moment.

The paper will follow a natural progression:
Section~\ref{sec:Modelling} introduces the financial modelling setup of the analysis,
introducing rough local stochastic volatility models, and proving preliminary results fundamental 
for their application in quantitative finance.
In Section~\ref{sec:PPDE}, we show that, in this context, a financial derivative is the solution
to a curve-dependent PDE, for which we propose in Section~\ref{sec:PPDENum} a backward discretisation algorithm, and draw inspiration from the deep learning
methodology developed in~\cite{JentzenDeep}.
Using simulations, we show the validity of this technique in Section~\ref{sec:Numerics} for the rough Heston model, as well as some of its pitfalls.
We gather in an appendix some long proofs and reminders to avoid disrupting the flow of the paper.

\section{Modelling framework}\label{sec:Modelling}
In order to dive right into the modelling framework and our main setup, 
we postpone to Appendix~\ref{sec:Review} a review of the functional It\^o formula
for stochastic Volterra systems, as developed by Viens and Zhang~\cite{Viens}.
We introduce a rough local stochastic volatility model for the dynamics of a stock price process.
Before diving into numerical considerations, we adapt the classical framework of no-arbitrage and market completeness to this setup in order to ensure that pricing and calibration make any sense at all.

\subsection{Rough (local) stochastic volatility model}\label{sec:Model}
We are interested here in stochastic volatility models, where the volatility is rough, 
in the sense of~\cite{Volrough}.
This can be written, under the historical measure, as
\begin{equation}\label{eq:StockSDE}
\left\{\begin{array}{rl}
S_t & = \displaystyle S_0 + \int_{0}^{t}\mu_r S_r \D r + \int_{0}^{t}l(r, S_r, V_r)S_r\D W_r,\\
V_t & = \displaystyle V_0 + \int_{0}^{t}\Kr(t-r)\Big(b(V_r) \D r + \xi(V_r) \D B_r\Big),\\
\D\la W, B\ra_t & = \rho\,\D t,
\end{array}
\right.\end{equation}
where $\rho \in [-1, 0]$, $S_0, V_0$ are strictly positive real numbers, and~$W$ and~$B$ are two standard Brownian motions.
Setting $\Xm = (S, V)$, we can rewrite the system as 
\begin{equation}\label{eq:XSDE}
\Xm_t = \Xm_0 + \int_{0}^{t}\bm(t,r,\Xm_{r})\D r + \int_{0}^{t}\sm(t,r,\Xm_{r})\cdot\D \Bm_r,
\end{equation}
where
\begin{equation}\label{eq:CoeffsSV}
\bm(t, r, \Xm_{r}) = \begin{pmatrix}
\mu_{r} S_{r} \\ \Kr(t-r) b(V_r)
\end{pmatrix}
\qquad\text{and}\qquad
\sm(t, r, \Xm_{r}) = \begin{pmatrix}
\rrho \,l(t, S_{r}, V_{r})S_{r} & \rho \, l(t, S_{r}, V_{r})S_{r}\\
0  & \Kr(t-r) \xi(V_{r})
\end{pmatrix},
\end{equation}
where $\rrho:=\sqrt{1-\rho^2}$
and $\Bm = (B^{\perp}, B)$, with~$W:=\rho B + \rrho B^{\perp}$.
The SDE for~$\Xm$ represents a stochastic Volterra system which is not Markovian in general.
We could in principle allow for more generality and assume, following~\cite{Viens},
that for any $t\geq 0$ and $r \in [0,t]$, the coefficients~$\bm$ and~$\sm$ 
depend on the past trajectory~$\Xm_{r\wedge\cdot}$, for example to include models with delay. 
However, such an extension is not needed in the application we are interested in, and we shall not pursue it.
\begin{remark}\label{rem:LocalVol}
We shall only make assumptions on the behaviour of the function~$l(\cdot)$ that are enough to ensure existence and uniqueness of the system.
A classical example in Mathematical Finance
is $l(t, S, V) = \varsigma(V)$, for some function~$\varsigma(\cdot)$, in which case~\eqref{eq:StockSDE} corresponds to a (rough) stochastic volatility model.
A more general setting is that of a local (rough) stochastic volatility model, with $l(\cdot)$ of the form $l(t, S, v) = \Lr(t, S)\varsigma(v)$, 
where~$\Lr(\cdot, \cdot)$ is called the leverage function.
From the results by Dupire~\cite{DupireVol} and Gy\"ongy~\cite{Gyongy},
if one wants to ensure that this model calibrates exactly to European option prices, 
then the equality
$$
\sigl^2(t, s)
 = \EE^{\QQ}\left[l(t, S_t, V_t)^2\vert S_t = s\right]
  = \EE^{\QQ}\left[\Lr(t, S_t)^2 \varsigma(V_t)^2\vert S_t = s\right]
 = \Lr(t, s)^2\EE^{\QQ}\left[\varsigma(V_t)^2\vert S_t = s\right]
$$
must hold for every $t, s\geq 0$, where the function~$\sigl$ 
is called the local volatility and is obtained (at least in theory) directly from European option prices.
Here~$\QQ$ denotes any given risk-neutral measure (we show later that there are in fact infinitely many of them).
We will show below (Assumption~\ref{assu:Martingale} and Theorem~\ref{thm:NoArb})
that such a probability measure exists, thereby making the model meaningful for option pricing.
The leverage function can then be recovered directly as
\begin{equation}\label{eq:LocalVol}
\Lr(t, s) = \frac{\sigl(t, s)}{\sqrt{\EE^{\QQ}\left[\varsigma(V_t)^2\vert S_t = s\right]}},
\end{equation}
as long as the right-hand side makes sense, and Assumption~\ref{assu:Martingale} below ensures this is indeed the case.
The term on the right-hand side is a conditional expectation with respect to the stock price, and therefore only depends on~$t$ and~$\{S_t=s\}$, no matter whether the variance process is Markovian or not.
This class of models has the advantage of ensuring perfect (at least theoretically) 
calibration to European option prices, while giving flexibility to price other options,
in particular path-dependent or exotic options.
A rigorous proof of the existence and uniqueness of~\eqref{eq:StockSDE} is outside the scope of this paper;
in fact, even in the classical (non-rough case), a general answer does not exist yet, 
and only recent advances~\cite{Jourdain, Lacker} have been made in this direction.
From a numerical perspective, calibration of the leverage function can be performed precisely using the particle
method developed by Guyon and Henry-Labord\`ere~\cite{GuyonPHL}. 
Ideally, and we hope to achieve this in a near future, one should combine the latter with the simulation of the rough volatility component
in order to calibrate vanilla smiles perfectly while capturing other specificities of the market.
\end{remark}
We shall always work under the following considerations:
\begin{assumption}\label{assu:Martingale}\ 
\begin{enumerate}[(i)]
\item The kernel $K\in L^2_{\mathrm{loc}}(\RR_+\to\RR)$ admits a resolvent of the first kind,
and there exists $\gamma\in (0,2]$ such that
$$\int_{0}^{h}\Kr(t)^2\D t = \Oo(h^\gamma)
\quad\text{and}\quad
\int_{0}^{T}\left[\Kr(t+h) - \Kr(t)\right]\D t = \Oo(h^\gamma),
\quad \text{for every }T\geq 0,
\text{ as }h \text{ tends to zero};
$$
\item the functions~$b$ and~$\xi^2$ are linear of the form $b(y) = b_0+b_1 y$ and $\xi(y)^2 = a_0 + a_1 y$;
\item for any $t\geq 0$, 
the map $l(t,\cdot, v)$ is bounded away from zero and bounded above by~$\lr$ for any $v>0$, and 
$l(t,s,\cdot)$ is strictly positive, 
uniformly H\"older continuous 
with $l(t,s,y) \leq C_{\varsigma}\left(1+|y|^{p_{\varsigma}}\right)$
for some $C_{\varsigma}>0$ and $p_{\varsigma} \in (0,1)$;
\item the system~\eqref{eq:StockSDE} admits a unique (weak) solution.
\end{enumerate}
\end{assumption}
The form of the kernel~$\Kr$ ensures that the variance process has stationary increments.
We borrow Condition~(ii) from~\cite{AffineVolterra} so that, in the purely stochastic volatility case
$l(t,s,v)\equiv \sqrt{v}$, we are exactly in the setting of an affine Volterra system 
$(\log(S), V)$.
In fact, this condition alone ensures that the process~$V$ is an affine Volterra process, and by~\cite[Theorem~3.3]{AffineVolterra}, 
Assumption~\ref{assu:Martingale}(i) ensures that the SDE for~$V$ admits a continuous weak solution.
This in particular implies~\cite[Theorem~4.3]{AffineVolterra} that, under suitable integrability conditions,
\begin{equation}\label{eq:Riccati}
\EE\left[\left.\E^{uV_T}\right|\Ff_t\right] = 
\exp\left\{\phi(T-t) + \psi(T-t) V_t\right\},
\end{equation}
for any $0\leq t \leq T$, where the two functions~$\phi$ and~$\psi$ satisfy the system of Riccati equations
$$
\dot{\psi}(t) = b_1\psi(t) + \frac{a_1}{2}\psi(t)^2
\qquad\text{and}\qquad
\dot{\phi}(t) = b_0\psi(t) + \frac{a_0}{2}\psi(t)^2,
$$
with boundary conditions $\psi(0)=u$, $\phi(0)=0$.
Assumption~\ref{assu:Martingale}(i) is again borrowed from~\cite{AffineVolterra}, 
and we refer the interested reader to this paper for examples of kernels satisfying this condition. 
Most examples so far in quantitative finance, such as the power-law kernel 
$\Kr(t) \equiv t^{H-\half}$ and the Gamma kernel
$\Kr(t) \equiv t^{H-\half}\E^{-\lambda t}$, for $H \in (0,1)$, $\lambda>0$, 
fit into this framework.
Finally, Assumption~\ref{assu:Martingale}(iii) allows the representation
\begin{equation}\label{eq:OUExample}
S_t = S_0\exp\left\{\int_{0}^{t}\left(\mu_u - \frac{l^2(u, S_u, V_u)}{2}\right)\D u 
+ \int_{0}^{t}l(u, S_u, V_u)\D W_u\right\}
\end{equation}
to be valid since the stochastic integrand is square integrable, by virtue of
$$
\int_{0}^{t}\EE\left[l^2(u, S_u, V_u)\right]\D u
\leq C_{\varsigma}^2 \lr^2 \int_{0}^{t}\EE\left[\left(1+|V_u|^{2p_{\varsigma}}\right)\right]\D u.
$$
\begin{remark}
The assumption on the function~$l(\cdot)$ may look restrictive, 
in particular regarding the boundedness in the $S$-space;
in the context of local volatility models (Remark~\ref{rem:LocalVol}), this implies boundedness
for each $t\geq 0$ of the map~$\Lr(t, \cdot)$;
when inferring it from market data through~\eqref{eq:LocalVol}, 
it may not be bounded; 
but it is customary to truncate it for large values of the underlying, 
and we follow this convention throughout.
\end{remark}
\begin{remark}
One could consider a slightly different setup as~\eqref{eq:StockSDE}, where the kernel does not apply to the whole dynamics of the process~$V$, but only to the diffusion part, in the spirit of diffusions driven by Volterra Gaussian noises~\cite{BFG15, Volrough, HJL}.
In the simple case, following the seminal paper by Comte and Renault~\cite{Comte-Renault}, 
the variance process is the unique strong solution to the Volterra stochastic equation
\begin{equation}\label{eq:fOU}
V_t = V_0\E^{b t} + a\int_{0}^{t}\E^{b(t-u)}\D B_u^H.
\end{equation}
The fractional Brownian motion with Hurst exponent $H \in (0,1)$
admits the representation $B_t^H = \int_{0}^{t}\Kr(t-u)\D B_u$ for some standard Brownian motion~$B$ 
with the same filtration as~$B^H$~\cite{Ustunel}.
The process~$V$ in~\eqref{eq:fOU} is a continuous Gaussian process with finite variance
and Fernique's estimate~\cite{Fernique} yields that, for any $\alpha, T>0$ and $p<2$,
$\EE\left[\exp\left(\alpha \sup_{t\in [0,T]}|V_t|^p\right)\right]$
is finite, so that
for any $\alpha>0$, 
\begin{align*}
\EE\left[\E^{\alpha\int_{0}^{t}l(u, S_u, V_u)^2\D u}\right]
 & \leq \EE\left[\exp\left\{\alpha C^2_{\varsigma} \lr^2
 \int_{0}^{t}\left(1+|V_u|^{2p_{\varsigma}}\right)\D u \right\}\right]
  \leq  \EE\left[\exp\left\{\alpha C^2_{\varsigma}\lr^2 t
 \left[1 + \sup_{u\in[0,t]}|V_u|^{2p_{\varsigma}}\right] \right\}\right]
\end{align*}
is finite and~\eqref{eq:OUExample} holds.
\end{remark}

\subsection{Market completeness and arbitrage freeness}
In the general case where the correlation parameter is different from~$-1$ and~$1$, the presence of the two noises renders the market incomplete.
In order to be able to use the system~\eqref{eq:StockSDE} for pricing purposes, 
we need to show how to complete the market, 
and to check for the existence of some probability measure under which the stock price is a true martingale.
The latter issue was solved for the rough Heston model by El Euch and Rosenbaum~\cite{ERHedging},
while the rough Bergomi case with non-positive correlation was recently proved by Gassiat~\cite{Gassiat}. 
We show that this still holds in our framework.
We assume the existence of a money market account yielding a risk-free 
interest rate~$(r_t)_{t\geq 0}$.
\begin{theorem}\label{thm:NoArb}
The market is incomplete and free of arbitrage.
\end{theorem}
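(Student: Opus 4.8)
The plan is to read off both assertions from the Fundamental Theorem of Asset Pricing: absence of arbitrage (in the \textsc{nflvr} sense) amounts to exhibiting \emph{one} equivalent local martingale measure for the discounted stock $\widetilde{S}_t := \E^{-\int_0^t r_u\D u}S_t$, while incompleteness will follow by exhibiting infinitely many of them. Writing $\widetilde S$ in It\^o form (using~\eqref{eq:StockSDE} and $W=\rho B+\rrho B^\perp$), its drift is $(\mu_t-r_t)\widetilde S_t$ and its martingale part is $l(t,S_t,V_t)\widetilde S_t(\rrho\,\D B^\perp_t+\rho\,\D B_t)$. Searching for a Girsanov kernel $(\theta^\perp,\theta^B)$ that kills the drift leads to the single scalar constraint $\rrho\,\theta^\perp_t+\rho\,\theta^B_t=(\mu_t-r_t)/l(t,S_t,V_t)$, which is well posed since $l$ is bounded away from zero (Assumption~\ref{assu:Martingale}(iii)) and $\rho\neq\pm1$.

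The canonical choice $\theta^B\equiv0$, $\theta^\perp_t=(\mu_t-r_t)/\bigl(\rrho\,l(t,S_t,V_t)\bigr)$ yields a bounded kernel; hence the associated Dol\'eans--Dade exponential is a true martingale by Novikov's criterion and defines a probability $\QQ\sim\PP$ on $\Ff_T$. Under $\QQ$ the kernel leaves $B$ untouched, so the variance dynamics — and therefore~\eqref{eq:Riccati} — are unchanged, while $\widetilde S$ becomes a local martingale; by the Fundamental Theorem this already gives freeness of arbitrage. For incompleteness, pick instead any bounded previsible $\theta^B=\lambda$ and solve the constraint for $\theta^\perp$: the resulting $\QQ^\lambda\sim\PP$ are pairwise distinct, since the convolution drift of $V$ picks up the term $-\Kr(t-\cdot)\,\xi(V)\,\lambda$ and the law of $V$ genuinely changes. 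Thus the equivalent local martingale measure is far from unique, and the market is incomplete.

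The delicate point — needed for the model to be usable for pricing, and which we also record — is that $\widetilde S$ is a genuine martingale under $\QQ$, not merely a local one. Novikov applied directly to $\widetilde S$ is too crude in the affine (non-Gaussian) regime, because the exponential moments of $\int_0^T V_u\,\D u$ are finite only below a critical threshold; only in the Gaussian variance case does Fernique's estimate settle it at once. Instead I would run the Sin--Lions--Musiela argument: localise $\widetilde S$ along $\tau_n=\inf\{t:S_t\ge n\}$, pass to the candidate num\'eraire measure $\D\QQ^S/\D\QQ\propto\widetilde S_{\cdot\wedge\tau_n}$, and note that by Girsanov the variance under $\QQ^S$ acquires the extra convolution drift $\Kr(t-\cdot)\,\rho\,l(t,S_t,V_t)\,\xi(V_t)\,\D t$. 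Since $\rho\le0$, $l>0$ and $\xi(\cdot)\ge0$, this additional drift is non-positive, so $V$ under $\QQ^S$ cannot explode on $[0,T]$ any faster than under $\QQ$; non-explosion then forces $\EE^{\QQ}[\widetilde S_T]=S_0$, i.e.\ the martingale property. This is exactly the mechanism of El Euch--Rosenbaum~\cite{ERHedging} and Gassiat~\cite{Gassiat}, and the sign assumption $\rho\le0$ is essential (for $\rho>0$ one expects a strict local martingale).

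The main obstacle I anticipate is making this last comparison rigorous in the Volterra, non-Markovian setting: the damping drift sits inside the convolution with $\Kr$, so the usual pathwise comparison and non-explosion arguments for It\^o SDEs must be replaced by their stochastic-Volterra analogues (in the spirit of~\cite{AffineVolterra}), and one must check that the localisation and the num\'eraire change interact correctly with the memory kernel. Everything else — the Girsanov computations, Novikov for the bounded kernel, and the dimension count behind incompleteness — is routine given Assumption~\ref{assu:Martingale}.
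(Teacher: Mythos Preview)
Your proposal is correct and follows essentially the same route as the paper: set up the Girsanov kernel subject to the single scalar constraint $\rho\lambda_t+\rrho\beta_t=(\mu_t-r_t)/l(t,S_t,V_t)$, observe that this leaves one free adapted process (hence infinitely many equivalent local martingale measures and incompleteness), and invoke Novikov via the boundedness in Assumption~\ref{assu:Martingale}(iii) to make at least one such change of measure legitimate. The paper's own proof is in fact terser than yours --- it does not single out the canonical choice $\theta^B\equiv 0$ explicitly, and simply says ``follows directly from Novikov's criterion''.

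Where you go further than the paper is in the last two paragraphs: the Sin--Lions--Musiela / num\'eraire-change argument that $\widetilde S$ is a \emph{true} martingale under~$\QQ$, exploiting $\rho\le 0$. The paper does \emph{not} include this in the proof of Theorem~\ref{thm:NoArb}; it only establishes existence of an ELMM (hence NFLVR) and relegates the true-martingale question to the subsequent remark, where it cites Gassiat~\cite{Gassiat} for the rough Bergomi case. So your extra material is correct in spirit and well motivated, but strictly speaking it proves more than the theorem asserts and more than the paper's proof delivers. If you want to match the paper exactly, stop after the first two paragraphs; if you keep the rest, flag it as an addendum rather than part of the proof of the stated theorem.
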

\begin{proof}
We wrote~\eqref{eq:StockSDE} under the historical measure~$\PP$, but
for pricing purposes, we need it under the pricing measure~$\QQ$.
Introduce the market prices of risk as adapted processes~$(\lambda_t)_{t\geq 0}$ and~$(\beta_t)_{t\geq 0}$ such that
\begin{equation}\label{eq:ChangeMeasure}
\rho \lambda_t + \rrho\beta_t = \frac{\mu_t - r_t}{l(t, S_t, V_t)},
\end{equation}
which is well defined since the denominator is strictly positive by Assumption~\ref{assu:Martingale}(iii).
We now define the probability measure~$\QQ$ via its Radon-Nikodym derivative
$$
\left.\frac{\D\QQ}{\D\PP}\right|_{\Ff_t} := \exp\left\{-\frac{1}{2}\int_{0}^{t}\left(\lambda_s^2 + \beta_s^2\right)\D s - \int_{0}^{t}\left(\lambda_s\D B_s + \beta_s \D B_s^\perp\right)\right\},
$$
so that Girsanov's Theorem~\cite[Chapter 3.5]{Karatzas} implies that the processes $B^{\QQ}$ and $B^{\QQ, \perp}$ defined by 
$$
B^{\QQ}_t := B_t + \int_{0}^{t}\lambda_u \D u
\qquad\text{and}\qquad
B^{\QQ, \perp}_t := B_t^{\perp} + \int_{0}^{t}\beta_u\D u
$$
are orthogonal Brownian motions under~$\QQ$.
Therefore, under~$\QQ$, the stock price satisfies
\begin{equation}\label{eq:StockSDEQ}
\begin{array}{rl}
S_t & = \displaystyle S_0 + \int_{0}^{t}r_u S_u\D u + \int_{0}^{t}l(u, S_u, V_u)
\left(\rho\,\D B_u^{\QQ} + \rrho\,\D B_u^{\QQ,\perp}\right),\\
V_t & = \displaystyle V_0 + \int_{0}^{t}\Kr(t-u)\left(\widehat{b}(V_u)\D u + \xi(V_u) \D B_u^{\QQ}\right),
\end{array}
\end{equation}
where, under~$\QQ$, the drift of the variance process is now of the form 
$\widehat{b}(V_t) = b(V_t) - \lambda_t\xi(V_t)$.
Introducing the Brownian motion~$W^{\QQ}$ (under~$\QQ$), the first SDE then reads
$$
S_t = S_0 + \int_{0}^{t}r_u S_u\D u + \int_{0}^{t}l(u, S_u, V_u) \D W^{\QQ}.
$$
The proposition then follows directly from Novikov's criterion using 
Assumption~\ref{assu:Martingale}(iii).
\end{proof}

\begin{remark}\label{rem:rBergomi}
We can relax the assumption on~$l(\cdot)$.
Assume for example a rough local stochastic volatility setting (Remark~\ref{rem:LocalVol}) where
$l(t, S, v) = \Lr(t, S)\varsigma(v)$, but with Assumption~\ref{assu:Martingale}(iii) replaced by
$\varsigma(v) = \E^{\eta v}$, for $\eta>0$.
Since $\rho\leq 0$, the proof of~\cite[Theorem 1]{Gassiat}, 
using a localisation argument with the increasing sequence of stopping times
$\tau_n:=\inf\{t>0: V_t=n\}$, remains the same, and the stock price is a true martingale.
Thus a local volatility version of the rough Bergomi model~\cite{BFG15, JMM17, JPS17}
falls within our no-arbitrage setting.
\end{remark}

\section{Pricing via curve-dependent PDEs}\label{sec:PPDE}
We shall from now on only work under the risk-neutral measure and, 
with a slight abuse of notations, write~$b$ instead of~$\widehat{b}$
for the drift of the variance process (equivalently taking the market price of volatility risk to be null
in Theorem~\ref{thm:NoArb}).
In the classical It\^o diffusion setting, where the kernel~$\Kr$ is constant, 
the Feynman-Kac formula transforms the pricing problem from a probabilistic setting 
to a PDE formulation. 
The formulation of our rough local stochastic volatility model~\eqref{eq:StockSDE}
goes beyond this scope since the system is not Markovian any longer.
We adapt here the methodology developed by Viens and Zhang~\cite{Viens}
to show that the pricing problem is equivalent to solving a curve-dependent PDE.

We first start with the following lemma, which shows, not surprisingly, that the option price
should be viewed, not as a function of the state variable at a fixed given time, but 
as a functional over paths.
The argument is not new, and versions have already appeared for the rough Heston model in~\cite{ERHedging}
and in the context of It\^o's formula for stochastic Volterra equations in~\cite{Viens}.
Consider a European option with maturity~$T$ and payoff function~$g(\cdot)$ written on the stock price~$S$.
By standard no-arbitrage arguments, from Theorem~\ref{thm:NoArb}, 
its price at time~$t\in [0,T]$ reads 
\begin{equation}\label{eq:Price}
P_t = \EE[g(S_T)\vert\Ff_t].
\end{equation}
The key object in our analysis below is an infinite-dimensional stochastic process 
$(\Theta^t)_{t\in [0,T]}$, adapted to the filtration $(\Ff_t)_{t\in[0,T]}$, such that the following assumption holds,
making the price~\eqref{eq:Price} at time $t\leq T$ as a functional of $\Ff_t$-measurable quantities:

\begin{assumption}\label{assu:Functional}
There exist a process $(\Theta^t)_{t\in [0,T]}$, adapted to $(\Ff_{t})_{t\in [0,T]}$, and a map $\Ps:[0, T]\times[0,\infty)\times \Cc([0,T],\RR)\to\RR$, 
such that, for any $t\in [0,T]$, 
$$
P_t = \Ps\left(t, S_t, \left(\Theta^t_u\right)_{t\leq u\leq T}\right).
$$
\end{assumption}

For any $t\in[0,T]$, the one-dimensional curve~$\Theta^t$ 
(or equivalently the infinite-dimensional process $(\Theta^t)_{t\in [0,T]}$) 
plays a fundamental role and is related, though different, to the forward variance curve $u\mapsto \EE[V_u|\Ff_t]$.
It is far from trivial to show such a representation for general rough local stochastic volatility models.

\begin{example}[Rough Heston]
In the rough Heston model, the stock price satisfies the SDE~\eqref{eq:StockSDE} with
$b(v) = \kappa(\theta-v)$ and $\xi(v) = \nu\sqrt{v}$ for $\kappa, \theta,\nu>0$ (see also~\eqref{eq:RoughHeston} below).
In this case, defining
\begin{equation}\label{eq:Theta}
\Theta_{u}^{t} := 
\left\{
\begin{array}{ll}
V_u, & \text{if } u \in [0,t],\\
\displaystyle \EE\left[\left. V_u - \int_{t}^{u}b(u,r,V_{r})\D r\right|\Ff_t\right], & \text{if } u \in [t, T],
\end{array}
\right.
\end{equation}
then Assumption~\ref{assu:Functional} holds as detailed in~\cite[Section~5]{Viens}. 
Here~$\Theta^t$ only corresponds to the forward variance process $\EE[V_u|\Ff_t]$ of~\cite{ERHedging} only when the variance drift~$b$ is null.
\end{example}

\begin{example}[Rough Bergomi]
It also holds for the rough Bergomi model~\cite{BFG15} with a little more work, as explained in~\cite[Section~5]{Viens}.
In this model, the dynamics read
\begin{equation}\label{eq:rBergomi}
\left\{\begin{array}{rl}
S_t & = \displaystyle S_0 + \int_{0}^{t}\mu_r S_r \D r + \int_{0}^{t}\sqrt{V_t}S_r\D W_r,\\
V_t & = \displaystyle  V_0 \exp\left\{M_t - \half \lambda^2 t^2\right\}.
\end{array}
\right.\end{equation}
with $M_t = \lambda\sqrt{2H}\int_{0}^{t}(t-r)^{H-\half}\D B_r$.
We can rewrite~\eqref{eq:rBergomi} as a $(S,M)$ dynamics as
\begin{equation*}
\left\{\begin{array}{rl}
S_t & = \displaystyle S_0 + \int_{0}^{t}\mu_r S_r \D r + \int_{0}^{t}\sqrt{V_0} \exp\left\{\half M_r - \frac{1}{4}\lambda^2 r^2\right\}S_r\D W_r,\\
M_t & = \displaystyle  \lambda\sqrt{2H}\int_{0}^{t}(t-r)^{H-\half}\D B_r.
\end{array}
\right.
\end{equation*}
The couple $\Xm = (S, M)$ can therefore be rewritten as in~\eqref{eq:StockSDE}.
In light of Remark~\ref{rem:rBergomi}, the exponential function in the diffusion part of~$S$ violates Assumption~\ref{assu:Martingale}(iii) per se,
but the process remains a martingale (at least for $\rho\leq 0$ by~\cite{Gassiat}), which is enough for our purposes.
Setting 
\begin{equation*}
\Theta^t_u := 
\left\{
\begin{array}{ll}
M_u, & \text{if } u \in [0,t],\\
\displaystyle \EE[M_u|\Ff_t] = \lambda\sqrt{2H}\int_{0}^{t}(u-r)^{H-\half}\D B_r, & \text{if } u \in [t, T],
\end{array}
\right.
\end{equation*}
Assumption~\ref{assu:Functional} then holds~\cite[Section~5]{Viens}.
\end{example}
The following theorem is the main result here (proved in Appendix~\ref{sec:ProofThmPDEBSDE}), 
and shows how to extend the classical Feynman-Kac formula to the curve-dependent case. 
From now on, we shall adopt the notation
\begin{equation}\label{eq:K}
\Kr^t:= \Kr(\cdot-t), \qquad\text{for any }t\geq 0,
\end{equation}
to denote the curve~$\Kr^t$ seen at time~$t$. It will become clear in the following theorem how this becomes handy.
\begin{theorem}\label{thm:PDEBSDE}
Under Assumption~\ref{assu:Functional}, 
the option price~\eqref{eq:Price} is the unique solution to the linear curve-dependent PDE 
\begin{equation}\label{eq:CPDE}
\Big(\dt + \Ll_{x} + \Ll_{xx} + \Ll_{x\omega} + \Ll_{\omega} + \Ll_{\omega\omega} - r_t\Big)\Ps\left(t, S_t, \Theta^t\right) = 0,
\end{equation}
for $t\in [0, T)$, with boundary condition $P\left(T, S_T, \Theta^T\right) = g(S_T)$, 
where, at the point $(t, x, \Theta^t)$,
\begin{equation*}
\begin{array}{rlrl}
\Ll_{x\omega} & := \displaystyle\rho l(t,x,\Theta^t_t)\xi(\Theta^t_t) x\la
\partial_{x,\omega}, \Kr^t\ra,
& \Ll_{xx} & := \displaystyle\frac{1}{2} l(t,x,\Theta^t_t)^2x^2\dxx,
\qquad \Ll_{x} := r_t x \dx,\\
\Ll_{\omega\omega} & := \displaystyle\frac{1}{2}\xi(\Theta^t_t)^2\la\domom, (\Kr^t, \Kr^t)\ra,
& \Ll_{\omega} & := \displaystyle b(\Theta^t_t)\la\dom, \Kr^t\ra.
\end{array}
\end{equation*}
\end{theorem}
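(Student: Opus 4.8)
The plan is to combine the martingale property of the discounted price with the functional It\^o formula for stochastic Volterra systems recalled in Appendix~\ref{sec:Review}. By Theorem~\ref{thm:NoArb} and standard no-arbitrage arguments the discounted price $\E^{-\int_0^t r_s \D s}P_t$ is a $\QQ$-martingale, and under Assumption~\ref{assu:Functional} it equals $\E^{-\int_0^t r_s \D s}\Ps(t,S_t,\Theta^t)$. The first step is to apply the functional It\^o formula of Viens and Zhang to this process, to collect its drift, and then to invoke the martingale property to conclude that this drift vanishes identically on $[0,T)$; this is precisely the CPDE~\eqref{eq:CPDE}. The terminal condition is immediate, since $\Theta^T_u=V_u$ for every $u\in[0,T]$, so that $\Ps(T,S_T,\Theta^T)=P_T=g(S_T)$.

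The core of the argument is the identification of the generator. Given~$V$, the process~$S$ carries no path-dependence and is a genuine It\^o process, so its drift $r_t S_t$ and diffusion $l(t,S_t,V_t)S_t$ under~$\QQ$ contribute the terms $\Ll_{x}$ and $\Ll_{xx}$, while the discount factor produces $-r_t$. For the curve component one uses the explicit semimartingale representation of $t\mapsto\Theta^t_u$: writing the variance process in integrated form under~$\QQ$ and taking conditional expectations gives, for $u>t$, $\D_t\Theta^t_u=\Kr^t(u)\big(b(\Theta^t_t)\D t+\xi(\Theta^t_t)\D B^{\QQ}_t\big)$, with $\Theta^t_t=V_t$. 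Pairing the first vertical derivative $\dom\Ps$ against $\Kr^t$ with the drift part, and half the second vertical derivative $\domom\Ps$ against $(\Kr^t,\Kr^t)$ with the quadratic variation, produces $\Ll_{\omega}$ and $\Ll_{\omega\omega}$; the correlation $\D\la W^{\QQ},B^{\QQ}\ra_t=\rho\,\D t$ between the driver of~$S$ and the driver of the curve yields the cross term $\Ll_{x\omega}$; and the horizontal (time) derivative of the functional, which also absorbs the shrinking of the domain $[t,T]$, gives $\dt\Ps$. Summing these reproduces exactly the operator in the statement. Equivalently, one may phrase this through the associated linear BSDE with generator $y\mapsto -r_t y$ and terminal value $g(S_T)$, whose unique solution the functional It\^o formula realises as $\Ps(t,S_t,\Theta^t)$.

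For uniqueness I would run the same computation in reverse. If $\Phi$ is any solution of~\eqref{eq:CPDE} lying in the regularity class for which the functional It\^o formula is valid, then $\E^{-\int_0^t r_s \D s}\Phi(t,S_t,\Theta^t)$ is a local martingale with vanishing drift; the moment estimates of Section~\ref{sec:Model} (in particular those behind~\eqref{eq:OUExample}) upgrade it to a true martingale, whence $\Phi(t,S_t,\Theta^t)=\EE\big[\E^{-\int_t^T r_s \D s}g(S_T)\,\vert\,\Ff_t\big]=P_t$ for every $t$, so that $\Phi$ coincides with~$\Ps$ along the support of $(t,S_t,\Theta^t)$. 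Uniqueness in the viscosity sense then follows from the well-posedness theory for path-dependent PDEs of Touzi, Zhang and co-authors, or equivalently from uniqueness of the solution of the BSDE above.

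The main obstacle is the regularity and integrability bookkeeping rather than the algebra. One must verify that~$\Ps$ belongs to the class of functionals to which the Viens and Zhang functional It\^o formula applies, namely that the horizontal derivative and the first and second vertical derivatives exist with the required joint continuity, and that every stochastic integral appearing in the expansion is a true, not merely local, martingale. Both points rest on Assumption~\ref{assu:Functional} and on the integrability estimates of Section~\ref{sec:Model}; the affine structure of $(b,\xi^2)$ from Assumption~\ref{assu:Martingale}(ii) and the exponential-affine moment formula~\eqref{eq:Riccati} are precisely what make these bounds tractable.
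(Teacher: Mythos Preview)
Your proof is correct, but it follows a different route from the paper's. You work directly under the risk-neutral measure, apply the functional It\^o formula to the discounted price $\E^{-\int_0^t r_s\D s}\Ps(t,S_t,\Theta^t)$, and read off the CPDE from the vanishing drift of this $\QQ$-martingale; uniqueness then comes from the standard verification argument via the associated linear BSDE. The paper instead derives the CPDE through a hedging argument in the Black--Scholes spirit: it forms a self-financing portfolio $\Pi_t=P_t-\Delta_t S_t-\gamma_t\Psi_t$ with an auxiliary derivative~$\Psi$, applies the functional It\^o formula to each component, chooses~$\Delta_t$ and~$\gamma_t$ to cancel the two noises, sets the residual drift equal to~$r_t\Pi_t$, and then uses a separation argument (the two sides depend on different contingent claims) to obtain the PDE and a market-price-of-risk term~$\widehat b_t$, which is subsequently absorbed into the drift.

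Your approach is shorter and avoids introducing~$\Psi$ and the separation step, and it sidesteps the somewhat artificial reappearance of the market price of risk (since~$\QQ$ is already fixed at the start of Section~\ref{sec:PPDE}). The paper's approach, on the other hand, delivers the hedging ratios $\Delta_t=\dx\Ps-\frac{\la\dom\Ps,\Kr^t\ra}{\la\dom\Psi_t,\Kr^t\ra}\dx\Psi_t$ and $\gamma_t=\frac{\la\dom\Ps,\Kr^t\ra}{\la\dom\Psi_t,\Kr^t\ra}$ as a by-product, and makes the link to market completion explicit. Both routes rest on the same application of Theorem~\ref{thm:Ito} and on the same regularity bookkeeping you flag at the end.
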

The pricing PDE~\eqref{eq:CPDE} has both state-dependent terms, involving derivatives with respect to~$x$,
and curve-dependent ones, involving functional derivatives with respect to~$\omega$.
In order to streamline the presentation, we defer to Appendix~\ref{sec:Review}
the precise framework, borrowed from~\cite{Viens}, to define these derivatives.
Without essential loss of understanding for the rest of the analysis, 
the reader can view them basically as Fr\'echet derivatives 
(with some regularisation due to the singularity of the kernel along the diagonal).
We would like to point out, however that our framework, supported by Assumption~\ref{assu:Functional},
concerns functions that depend on points~$S_t$ and on curves~$\Theta^t$ rather than on full history-dependent paths, 
as is done for example in~\cite{Touzi3, Touzi2, Touzi1}.
During the last revision stages of the present work, Bayer, Qiu and Yao~\cite{BayerSPDE} 
extended this approach to more general rough stochastic volatility models, via the use of backward stochastic PDEs,
investigating the existence of such equations, and developing a deep-learning based algorithm to solve them.

\begin{remark}
Assumption~\ref{assu:Functional} yields Theorem~\ref{thm:PDEBSDE} in the sense that, when evaluating the European option at time $t \in [0,T)$,
the price only depends on the value of the underlying~$S$ at time~$t$ and on the curve~$(\Theta^t_u)_{u \in [t,T]}$.
It is therefore not a fully path-dependent PDE, explaining the terminology `curve-dependent' rather than `path-dependent'.
While one could appeal directly to techniques tailored for curve-dependent PDEs, 
stating the results in the more general framework of path-dependent PDEs 
allow for greater generality and extensions to the fully path-dependent case.
One such example is that of a floating leg of a variance swap.
In this case, we can write the price at time $t\in [0,T]$ as
$$
P_t = \EE\left[\frac{1}{T}\int_{0}^{T}V_u\D u\vert\Ff_t\right]
 = \frac{1}{T}\EE\left[\int_{0}^{t}V_u\D u\vert\Ff_t\right] + \frac{1}{T}\EE\left[\int_{t}^{T}V_u\D u\vert\Ff_t\right]
 = \frac{1}{T}\int_{0}^{t}V_u\D u + \frac{1}{T}\int_{t}^{T}\EE[V_u\vert\Ff_t]\D u.
$$
Writing
\begin{equation}\label{eq:ThetaVarSwap}
\Theta_{u}^{t} := 
\left\{
\begin{array}{ll}
V_u, & \text{if } u \in [0,t],\\
\displaystyle \EE[V_u \vert\Ff_t], & \text{if } u \in [t, T],
\end{array}
\right.
\end{equation}
we see that, at time~$t$, the price $P_t$ is now a function of the whole path~$(\Theta^t_u)_{u \in [0,T]}$
and not just of the (forward) curve $(\Theta^t_u)_{u \in [t,T]}$.
\end{remark}

\begin{remark}
Another example covered in this framework is the following: consider an option with payoff
$$
P_T = g(S_T) + \int_{0}^{T}f(S_{u})\D u,
$$
for some functions~$f$ and~$g$ with sufficient smoothness and growth conditions.
At time $t\in [0,T]$, no-arbitrage arguments imply that, under the risk-neutral measure,
\begin{align*}
P_t = \EE[P_T\vert \Ff_t]
 & = \EE\left[g(S_T) + \int_{0}^{T}f(S_{u})\D u \Big\vert \Ff_t\right]\\
 & = \int_{0}^{t}f(u,S_{u})\D u + \EE\left[g(S_T)\vert \Ff_t\right] + \EE\left[\int_{t}^{T}f(S_{u})\D u\Big\vert \Ff_t\right]\\
 & = \int_{0}^{t}f(u,S_{u})\D u + \Ps^g\left(t, S_t, \left(\Theta^t_r\right)_{t\leq r\leq T}\right)
  + \int_{t}^{T}\Ps^f\left(t, S_t, \left(\Theta^t_r\right)_{t\leq r\leq u}\right),
\end{align*}
where~$\Ps^g(t, \cdots)$ corresponds to the price at time~$t$ of a European option with payoff~$g(\cdot)$ and maturity~$T$,
while~$\Ps^f(t, \cdots)$ to the price of an option at time~$t$  with payoff~$f(\cdot)$ and maturity~$u$.
In that case, our framework can accommodate this separately for the $g$-option and the $f$-option (with some extra cost for the discretisation of the integral).
\end{remark}

\section{Numerical framework for CPDEs}\label{sec:PPDENum}
Theorem~\ref{thm:PDEBSDE} showed that pricing under rough volatility could be analysed through 
the lens of path-dependent (or curve-dependent here
 PDEs.
However, numerical schemes for such equations are scarce, 
and the only approaches we are aware of is the extension of 
Barles and Souganidis' monotone scheme~\cite{Barles} to the path-dependent case by Zhang and Zhuo~\cite{Zhang},
the convergence of which was proved by Ren and Tan~\cite{RenTan}.
However, the actual implementation of this scheme in the PPDE context is far from trivial,
and we consider a different route here, more amenable to computations in our opinion, at least in our curve-dependent framework.
We first discretise the CPDE along some basis of functions, 
reducing the infinite-dimensional problem to a finite-, yet high-, dimensional problem.
High-dimensional PDEs suffer from the so-called curse of dimensionality,
and are notoriously difficult to solve.
We then develop a backward algorithm inspired by the method in~\cite{JentzenDeep}
to solve this system of PDEs.

\subsection{Discretisation of the CPDE}
For each $t\in [0,T]$, we consider a basis $\psib^t = (\psi_a^t)_{a=1,\ldots, p}$  of c\`adl\`ag functions,
for some fixed integer~$p$,
and use it to approximate~$\Theta^t$ and~$\Kr^t$ by
$$
\widehat{\Theta}^t := \thetab^t \cdot(\psib^t)^\top
\qquad\text{and}\qquad
\widehat{\Kr}^t := \kkappa^t \cdot(\psib^t)^\top,
$$
for some sequence of real coefficients 
$\thetab^t := (\theta^t_{a})_{a=1,\ldots, p}$ and $\kkappa^t:=(\kappa^t_{a})_{a=1,\ldots, p}$.
Since $\Kr^t\in\Dd_t$,
the space of c\`adl\`ag functions on $[t, T]$ (see Appendix~\ref{sec:FuncIto}), then, 
from Definition~\ref{def:FuncDeriv},
$$
\la\dom \Ps\left(t, x, \Theta^t\right), \Kr^t\ra 
:= \partial_{\eps} \Ps\left.\left(t, x, \Theta^t + \eps \Kr^t\,\ind_{[t,T]}\right)\right|_{\eps=0}
= \partial_{\eps} \Ps\left.\left(t, x, \Theta^t + \eps \Kr^t\right)\right|_{\eps=0},
$$
and we can introduce the following approximations of the path derivatives along the direction~$\Kr^t$:
\begin{align*}
\la\dom \Ps\left(t, x, \widehat{\Theta}^t\right), \widehat{\Kr}^t\ra
 & := \partial_{\eps} \Ps\left.\left(t, x, \widehat{\Theta}^t + \eps \widehat{\Kr}^t\right)\right|_{\eps=0}
 = \partial_{\eps} \Ps\left.\left(t, x,\sum_{a=1}^{p}\Big(\theta^t_a + \eps \kappa^t_a\Big)\psi_a\right)\right|_{\eps=0}\\
 & =: \partial_{\eps} \PHat\left.\left(t, x,\left(\theta^t_a + \eps \kappa^t_a\right)_{a=1}^{p}\right)\right|_{\eps=0}
 = \sum_{a=1}^{p}\partial_{\theta^t_a} \PHat\left(t, x,\thetab^t\right)\kappa^t_a
  = \nabla_{\thetab^t}\PHat\left(t, x,\thetab^t\right)\cdot \kkappa^t,
\end{align*}
where the new function~$\PHat$ now acts on $[0,T]\times[0,\infty)\times\RR^p$.
Likewise, for the second functional derivative, 
$$
\la\domom \Ps\left(t, x,  \widehat{\Theta}^t\right), \left(\widehat{\Kr}^t,  \widehat{\Kr}^t\right)\ra
  = \sum_{a,j=1}^{p}\partial_{\theta^t_a \theta^t_j} \PHat\left(t, x,\thetab^t\right) \kappa^{t}_a \kappa^{t}_j
  = \left(\kkappa^t\right)^\top \cdot \Delta_{\thetab^t} \PHat\left(t, x,\thetab^t\right)\cdot\kkappa^t,
$$
and finally the cross derivatives can be approximated similarly as
$$
\la\dxom \Ps\left(t, x,  \widehat{\Theta}^t\right), \widehat{\Kr}^t\ra
 := \partial_{x}\nabla_{\thetab^t}\PHat\left(t, x,\thetab^t\right)\cdot\kkappa^t.
$$
The CPDE~\eqref{eq:CPDE} therefore becomes
\begin{equation}\label{eq:CPDEDiscrete}
\left(\dt + \Ll_{x} + \Ll_{xx} + \sum_{a=1}^{p}\Ll_{x\theta_a^t} + \sum_{a=1}^{p}\Ll_{\theta_a^t}
 + \sum_{a,j =1}^{p}\Ll_{\theta_a^t \theta_j^t} - r_t\right)\PHat = 0,
\end{equation}
where the differential operators are defined, for each $a,j=1, \ldots, p$, as 
\begin{equation*}
\begin{array}{rlrl}
\Ll_{x\theta_a} & := \displaystyle \rho\, l(t,x, \thetab^t) \xi(\thetab^t) x \kappa^t_a \partial_{x \theta_a^t},
& \Ll_{xx} & := \displaystyle\frac{l(t,x, \thetab^t)^2}{2} x^2 \dxx,
\qquad \Ll_{x} := r_t x \dx,\\
\Ll_{\theta_a^t \theta_j^t} & := \displaystyle \frac{\xi(\thetab^t)^2}{2}\kappa^t_a \kappa^t_j \partial_{\theta_a^t \theta_j^t},
& \Ll_{\theta_a} & := \displaystyle b(\thetab^t)\kappa^t_a \partial_{\theta_a^t}.
\end{array}
\end{equation*}
We can rewrite this system in a more concise way as 
\begin{equation}\label{eq:PDEPHat}
\dt \PHat + \frac{1}{2}\mathrm{Tr}\left(\Sigb\cdot\Sigb^{\top}\cdot\Delta\PHat\right) + {\boldsymbol\mu}\cdot\nabla \PHat - r_t\PHat = 0,
\end{equation}
where
${\boldsymbol\mu}\left(t, x, \thetab^t\right) := \left(x r_t, b(\thetab^t)\kappa^t_1, \ldots, b(\thetab^t)\kappa^t_{p}\right)^{\top}$
and
$$
\Sigb\left(t, x, \thetab^t\right) \cdot \Sigb\left(t, x, \thetab^t\right)^\top := 
\begin{pmatrix}
l(t, x, \thetab^t)^2 x^2 & \rho l(t, x, \thetab^t)\xi(\thetab^t) x \kappa^t_1
 & \cdots & \rho l(t, x, \thetab^t)\xi(\thetab^t) x \kappa^t_{p}\\
\rho\, l(t,x, \thetab^t) \xi(\thetab^t) x \kappa^t_1 & \xi(\thetab^t)^2 (\kappa^t_1)^2 & \cdots & \xi(\thetab^t)^2 \kappa^t_1 \kappa^t_{p}\\
\vdots & \vdots & \ddots & \vdots \\
 \rho\, l(t,x, \thetab^t) \xi(\thetab^t) x \kappa^t_{p} & \xi(\thetab^t)^2 \kappa^t_{1} \kappa^t_{p}
  & \cdots
 & \xi(\thetab^t)^2 \left(\kappa^t_{p}\right)^2
\end{pmatrix}
$$

\begin{remark}\label{rem:Simple}
The simplest example is to consider piecewise constant curves
$\psi^t_a = \ind_{\delta^t_a}$,
for $a=1,\ldots, p$, where $(\delta^t_a)_{a=1}^{p}$ represents a mesh of the interval~$[t, T]$.
To simplify the notations below, we shall consider the mesh
$\delta^t_a = [t_{a-1}, t_a)$, and we write
$\theta_a^t = \Theta^t_{t_a}$ and $\kappa^t_a = \Kr(t_a - t)$.
\end{remark}


There is an interesting connection between the functional It\^o formula in Theorem~\ref{thm:Ito} and backward stochastic differential equations.
Following~\cite{KarouiBSDE}, consider the multidimensional BSDE
\begin{equation}\label{eq:BSDE}
\left\{
\begin{array}{rl}
X_t = & \displaystyle \xi + \int_{0}^{t}\overline{\mu}(r, X_r)\D r + \int_{0}^{t}\overline{\Sigma}(r, X_r)\D W_r,\\
Y_t = & \displaystyle g(X_T) + \int_{t}^{T}f(r, X_r, Y_r, Z_r)\D r - \int_{t}^{T}Z_r^\top \cdot \D W_r,
\end{array}
\right.
\end{equation}
on some filtered probability space $(\Omega, \Ff, (\Ff_t)_{t\geq 0}, \PP)$ supporting a $d$-dimensional Brownian motion~$W$.
The solution process $(X, Y, Z)$ takes values, 
at each point in time, in $\RR^d\times\RR\times\RR^d$.
Consider further the PDE
\begin{equation}\label{eq:PDE}
\dt u(t, x) +\frac{1}{2}\mathrm{Tr}\Big(\overline{\Sigma}(t,x)\overline{\Sigma}(t,x)^\top \Delta u(t, x)\Big)
 + \overline{\mu}(t,x) \nabla u(t, x) + f\Big(t, x, u(t,x), \overline{\Sigma}(t,x)^\top\nabla u(t,x)\Big) = 0,
\end{equation}
with terminal boundary condition $u(T,x) = g(x)$.
In the regular case (Assumption~\ref{assu:Kernel}(i)), as shown in~\cite{Viens}, 
if~\eqref{eq:PDE} has a classical solution in $\Cc^{1,2}_+(\Lambda)$ (see Appendix~\ref{sec:FuncIto}), then the couple $(Y,Z)$
defined as $Y_t := u(t,X_t)$ and $Z_t := \overline{\Sigma}(t,X_t)^\top \cdot \nabla u(t, X_t)$ 
is the solution to~\eqref{eq:BSDE}.
In light of this result, and getting inspiration from~\cite{JentzenDeep},
if the solution to the option problem satisfies~\eqref{eq:PDEPHat}, it also solves the BSDE~\eqref{eq:BSDE},
namely
\begin{equation}\label{eq:BSDEPriceBwd}
\PHat\left(t, S_t, \thetab^t\right)
 = \PHat\left(T, S_T, \thetab^T\right)
 + \int_{t}^{T} r_u\PHat\left(u, S_u, \thetab^u\right)\D u
 - \int_{t}^{T} \Sigb\left(u, S_u, \thetab^u\right)^\top \cdot \nabla\PHat\left(u, S_u, \thetab^u\right)\D W_u,
\end{equation}
or, written in forward form, 
\begin{equation}\label{eq:BSDEPriceFwd}
\PHat\left(t, S_t, \thetab^t\right)
 = \PHat\left(0, S_0, \thetab^0\right)
 - \int_{0}^{t} r_u\PHat\left(u, S_u, \thetab^u\right)\D u
 + \int_{0}^{t} \Sigb\left(u, S_u, \thetab^u\right)^\top \cdot \nabla\PHat\left(u, S_u, \thetab^u\right)\D W_u.
\end{equation}

\subsection{Neural network structure}
We now introduce the neural network structure that will help us solve the high-dimensional pricing problem above.
We concentrate on the setting in Remark~\ref{rem:Simple}.

\subsubsection{Simulation of the network inputs}
We first discretise in time the stochastic Volterra system~\eqref{eq:XSDE}.
Many different possible discretisation schemes exist, and we shall not here explore them in great details.
The fundamental feature here is the singularity of the kernel on the diagonal, which requires special care
and is dealt with using a hybrid scheme, recently developed by Bennedsen, Lunde and Pakkanen~\cite{BLP15}.
We postpone to Appendix~\ref{appsec:rHestonSimul} a detailed analysis of the discretisation for the rough Heston model, 
which we will use in our numerical application later on.
For the sake of our argument here, all we require at the moment is a discretised process
$(S_{t_i}, (\Theta_{t_j}^{t_i})_{0\leq j\leq n})$ along a grid $(t_i = \frac{iT}{n})_{0\leq i\leq n}$, for some integer~$n$.

\subsubsection{Euler discretisation scheme}
We iteratively compute the price of the option at each time step.
Both~$\PHat_0$ and~$(\nabla\PHat_0)$ are the initial price and gradient that will be optimised with the weights of the network.
On the two-dimensional grid $(t_i,t_j)_{0\leq i,j\leq n}$, 
we can discretise the forward stochastic equation~\eqref{eq:BSDEPriceFwd} as
\begin{equation*}
\left\{
\begin{array}{rcl}
\PHat\left(t_0, S_{t_0}, \thetab^{t_0}\right) & = & \PHat_0,\\
\nabla \PHat\left(t_0, S_{t_0}, \thetab^{t_0}\right) & = & (\nabla\PHat)_0,\\
\PHat\left(t_{i+1}, S_{t_{i+1}}, \thetab^{t_{i+1}}\right) & = & 
\left(1-r_{t_i}\Delta_{i}\right) \PHat\left(t_{i}, S_{t_{i}}, \thetab^{t_{i}}\right)
 + \Sigma\left(t_i, S_{t_i}, \thetab^{t_i}\right)^\top \nabla\PHat\left(t_{i}, S_{t_{i}}, \thetab^{t_{i}}\right)\Delta W_{t_i},
\end{array}
\right.
\end{equation*}
where $W_{t_i} = (B^{\QQ,\perp}_{t_i}, B_{t_i},\ldots, B_{t_i})^{\top}\in\RR^{p+1}$.
Discretising the backward SDE~\eqref{eq:BSDEPriceBwd} would give
\begin{equation}\label{eq:BwdScheme}
\left\{
\begin{array}{rl}
\PHat\left(t_n, S_{t_n}, \thetab^{t_n}\right) & = g(S_T),\\
\PHat\left(t_{i}, S_{t_{i}}, \thetab^{t_{i}}\right) & = 
\left(1+r_{t_{i+1}}\Delta_{i+1}\right) \PHat\left(t_{i+1}, S_{t_{i+1}}, \thetab^{t_{i+1}}\right)
 - \Sigma\left(t_i, S_{t_i}, \thetab^{t_i}\right)^\top \nabla\PHat\left(t_{i}, S_{t_{i}}, \thetab^{t_{i}}\right)\Delta W_{t_i}.
\end{array}
\right.
\end{equation}
We follow here this backward approach, more natural for pricing exotic (path-dependent) derivatives,
such as Bermudan or American options.
This is, strictly speaking, a Forward-Backward approach as we simulate the stock price forward,
and then the option price backward, but we stick to the `Backward' terminology.
\begin{remark}
Standard Euler schemes to discretise the backward SDE~\eqref{eq:BSDE} 
are obviously faced with measurability issues of the~$Z_{t_i} =  \Sigma(t_i, \cdots)^\top\nabla\PHat(t_i, \cdots)$ component,
and we refer the reader to~\cite{Bouchard, Peng} for different related schemes, 
all essentially based on taking conditional expectations at time~$t_i$.
Here, however, we are not computing this term exactly but, as detailed below,
are using a neural network to learn it.
So in fact, the network learns the measurable version of this term along the discretised time grid.
This does not guarantee measurability of~$\PHat(t_i, \cdots)$ on the left-hand side of~\eqref{eq:BwdScheme} though. Refinements of this scheme to tackle this issue will be investigated in future works.
\end{remark}
\begin{remark}
It would be interesting to compare the discretisation and the neural network approach for both the forward and the backward problems.
However, in order to keep the focus on the paper on our ultimate goal 
(a numerical scheme for rough volatility models), 
we leave this suggestion to further research.
\end{remark}

\subsection{Neural networks}\label{sec:SubNetwork}
Based on the discretisation of the process, we introduce a coarser discretisation grid~$(\tau_i)_{i=1,\ldots, m}$
such that $\tau_m=t_n$, $\tau_0=t_0$ and $m\leq n$.
In~\cite{JentzenDeep}, E, Han and Jentzen assumed $n=m$, but we allow here for more flexibility.
This also greatly improves the speed of the algorithm, as it reduces the number of networks--hence the number of network parameters--and 
simplifies the computation of the loss function.
For each $i=1, \ldots, m$, the price as well as the model parameters are known.
The only unknown is the term $\Sigma\left(\tau_i, S_{\tau_i}, \thetab^{\tau_i}\right)^\top \nabla\PHat\left(\tau_{i}, S_{\tau_{i}}, \thetab^{\tau_{i}}\right)$ involving the gradient and the diffusion matrix. 
We therefore use the neural network below to infer its value.
The input consists of one input layer with the value of the processes at~$t_i$.
Each hidden layer is computed by multiplying the previous layer by the weights~$\wwb$ 
and adding a bias~$\ddelta$.
After computing a layer, we apply a batch normalisation by computing the mean~$\mathfrak{m}$ and the standard deviation~$s$ of the layer, 
and by applying the linear transformation $\mathrm{T}(x) := \gamma\frac{x-\mathfrak{m}}{s} + \beta$ to each element of the layer,
where the scale~$\gamma$ and the offset~$\beta$ are to be calibrated.
We also use the \texttt{ReLu} activation function $a(x) = x_+$ on each element of the layer.

\tikzset{%
  every neuron/.style={
    rectangle,
    draw,
    minimum size=1cm,
    rounded corners=8
  },
    noDraw/.style={
    rectangle,
    minimum size=1cm,
    rounded corners=8
  },
  neuron missing/.style={
    draw=none, 
    scale=4,
    text height=0.333cm,
    execute at begin node=\color{red}$\vdots$
  },
}

\begin{tikzpicture}[x=1.5cm, y=1.5cm, >=stealth]
\node [every neuron/.try, neuron /.try] (-) at (-1.5,-2.5) {$\PHat\left(\tau_{i-1}, S_{\tau_{i-1}}, \Th^{\tau_{i-1}}\right)$};
\node [every neuron/.try, neuron /.try] (-) at (-1.5,-7.5) {$S_{\tau_{i-1}}, \Th^{\tau_{i-1}}$};

\draw [<-] (-0.3,-2.5)-- ++(0.3,0)    node [above, midway] {};
\draw [<-] (1.9,-2.5)-- ++(0.4,0)    node [above, midway] {};
\draw [<-] (4.6,-2.5)-- ++(0.2,0)    node [above, midway] {};
\draw [<-] (5.25,-2.5)-- ++(0.25,0)    node [above, midway] {};
\draw [->] (1.3,-3.1)-- ++(-0.,0.2)    node [above, midway] {}; 
\draw [->] (-0.4,-7.5)-- ++(0.7,0)    node [above, midway] {};
\draw [decorate,decoration={brace,amplitude=10pt,raise=4pt},yshift=0pt]
(4,-4.2) -- (4,-6.7) node [black,midway,xshift=2.3cm] {\footnotesize Multilayer neural network};
\draw [decorate,decoration={brace,amplitude=5pt,mirror,raise=4pt},yshift=0pt]
(-0.4,-3.2) -- (-0.4,-3.8) node [black,midway,xshift=-1.3cm] {\footnotesize Output data};
\draw [decorate,decoration={brace,amplitude=5pt,raise=4pt},yshift=0pt]
(4.2,-7.2) -- (4.2,-7.8) node [black,midway,xshift=1.3cm] {\footnotesize Input data};

\node [every neuron/.try, neuron /.try] (-) at (1,-2.5) {$\PHat\left(\tau_{i}, S_{\tau_{i}}, \Th^{\tau_{i}}\right)$};
\node [every neuron/.try, neuron /.try] (-) at (1,-3.5) {$(\Sigma^\top\cdot\nabla\PHat)\left(\tau_{i}, S_{\tau_{i}}, \Th^{\tau_{i}}\right)$};
\node [every neuron/.try, neuron /.try] (-) at (1,-4.5) {Layer L};
\node [noDraw/.try, neuron /.try] (-) at (1,-5.5) {$\vdots$};
\node [every neuron/.try, neuron /.try] (-) at (1,-6.5) {Layer $1$};
\node [every neuron/.try, neuron /.try] (-) at (1,-7.5) {$S_{\tau_{i}}, \Th^{\tau_{i}}$};
\node [every neuron/.try, neuron /.try] (-) at (1,-8.5) {$B_{\tau_{i}}-B_{\tau_{i-1}}, B_{\tau_{i}}^{\perp} - B_{\tau_{i-1}}^{\perp}$};

\draw [->] (1,-8.1)-- ++(0,0.2)    node [above, midway] {};
\draw [->] (1,-7.1)-- ++(0,0.2)    node [above, midway] {};
\draw [->] (1,-6.1)-- ++(0,0.2)    node [above, midway] {};
\draw [->] (1,-5.3)-- ++(0,0.4)    node [above, midway] {};
\draw [->] (1,-4.1)-- ++(0,0.2)    node [above, midway] {};

\node [every neuron/.try, neuron /.try] (-) at (3.5,-2.5) {$\PHat\left(\tau_{i+1}, S_{\tau_{i+1}}, \Th^{\tau_{i+1}}\right)$};
\node [noDraw/.try, neuron /.try] (-) at (5.1,-2.5) {$\cdots$};

\node [every neuron/.try, neuron /.try] (-) at (6.5,-2.5) {$\PHat\left(\tau_{m}, S_{\tau_{m}}, \Th^{\tau_{m}}\right)$};

\node [noDraw/.try, neuron /.try] (-) at (-1.6,-9) {$\tau_{i-1}$};
\node [noDraw/.try, neuron /.try] (-) at (1,-9) {$\tau_{i}$};
\node [noDraw/.try, neuron /.try] (-) at (3.5,-9) {$\tau_{i+1}$};
\node [noDraw/.try, neuron /.try] (-) at (6.5,-9) {$\tau_{m}$};
\end{tikzpicture}

\subsection{Optimisation of the algorithm}

In the algorithm,~$n_l$ denotes the number of layers per sub-network,~$n_N$ the number of neurons 
per layer,~$n_B$ the number of batches used to separate the samples,~$N$ the size of each batch and 
{\em epoch} shall denote the number of times all the samples are fed to the training algorithm.
We finally introduce the following loss function that we aim to minimise:
$$
\Lm\left(\wwb, \ddelta, \beta, \gamma\right) := 
\EE\left[\left|\PHat\left(\tau_0, S_{\tau_0},\Th^{\tau_0}\right) - \EE\left[\PHat\left(\tau_0, S_{\tau_0},\Th^{\tau_0}\right)\right]\right|^2\right],
$$
or, in fact, its version on the sample,
\begin{equation}\label{eq:LossFunctionFwd}
\widehat{\Lm}\left(\wwb, \ddelta, \beta, \gamma\right) := 
\frac{1}{N}\sum_{k=1}^{N}\left|\PHat\left(\tau_0,S_{\tau_0},\left(\Th^{\tau_0}\right)^k\right) - \frac{1}{N}\sum_{l=1}^{N}\PHat\left(\tau_0,S_{\tau_0},\left(\Th^{\tau_0}\right)^l\right)\right|^2.
\end{equation}
It represents the variance of the initial price found by backward iterations for each simulated path. 
Since the initial price is unique and deterministic, 
minimising its variance is natural good way to compute the initial price.
Regarding the optimisation itself, we use the Adaptive Moment Estimation method~\cite{ADAM} 
for the first iterations, and switch to the stochastic gradient method, 
with slower but more stable convergence properties.
The different parameters to calibrate are then
\begin{itemize}
\item The weights~$\wwb$ and biases~$\ddelta$ for each layer of each sub-network;
\item The $\beta$ and $\gamma$ in the batch normalisation.
\end{itemize}
The initial price $\PHat_0$ is determined by an average over one or several batches (see~\cite{Yu} for similar loss functions).

\newpage
\section{Numerics: application to the rough Heston model}\label{sec:Numerics}
We develop numerics for the rough Heston model, developed in~\cite{ERChar, ERHedging, EEFR}, which has the following form:
\begin{equation}\label{eq:RoughHeston}
\left\{
  \begin{array}{lr}
\displaystyle \D S_t = S_t \sqrt{V_t} \D W_t,\\
\displaystyle V_t  = V_0 +  \int_0^t  \Kr(t - s) \left[\kappa(\theta - V_s) \D s + \nu\sqrt{V_s} \D Z_s\right], \\
\D \langle W, Z\rangle_t = \rho\, \D t,
  \end{array}
\right.
\end{equation}
where the kernel is defined as 
$\Kr(t) := \frac{1}{\Gamma(\alpha)} t^{\alpha - 1}$, for $\alpha \in (\frac{1}{2},1)$.
By~\cite{AffineVolterra}, the variance process~$V$ admits a non-negative weak solution if both~$\theta$ and~$V_0$ are non-negative.
The model is not Markovian, and a precise Monte Carlo scheme with accurate convergence is not available yet.
We adapt the hybrid scheme from~\cite{BLP15} to the rough Heston model, detailed in Appendix~\ref{appsec:rHestonSimul}.
El Euch and Rosenbaum~\cite{ERChar} showed that, similarly to the standard Heston model ($\alpha=1$), 
the characteristic function of the stock price can be computed in semi-closed form, 
and the next section details this, as well as a numerical scheme, that we use for comparisons.
\subsection{Pricing via fractional Riccati equations}
For $t\geq 0$, $\Phi_t:\RR\to\CC$ denotes the characteristic function
$$
\Phi_t(u) = \EE\left[\left(\frac{S}{S_0}\right)^{\I u}\right].
$$
El Euch and Rosenbaum~\cite{ERChar} showed that 
$\log\Phi_t(u) = \kappa\theta  \Ik^1 h(u, t) + V_0 \Ik^{1 - \alpha} h(u, t)$,
where, for any $u\in\RR$, $h(u, t)$ solves the fractional Riccati equation
\begin{equation}\label{eq:FRE}
\Dk^{\alpha} h(u, t) = \Fk(u, h(u, t)),
\end{equation}
with boundary condition $\Ik^{1-\alpha}h(u,0)=0$ and 
$$
\Fk(u, x) := -\frac{u(u+\I)}{2} + (\I u \rho  \nu - \kappa)  x + \frac{\nu^2 x^2}{2}.
$$
Here, $\Ik^r$ and $\Dk^r$ denote the fractional integral and the fractional derivative defined by
$$
\Ik^r f(t) := \frac{1}{\Gamma(r)}  \int_0^t (t - s)^{r - 1} f(s) \D s
\qquad\text{and}\qquad
\Dk^r f(t) := \frac{1}{\Gamma(1 - r)} \frac{\D}{\D t} \int_0^t (t - s)^{-r} f(s) \D s.
$$
Contrary to the Heston model, the fractional equation~\eqref{eq:FRE} has no closed-form expression, 
and the Adams scheme was proposed in~\cite{ERChar} to compute it.
Taking fractional integrals of both sides of~\eqref{eq:FRE} yields
$$
h(u, t) = \frac{1}{\Gamma(\alpha)} \int_0^t (t - s)^{\alpha - 1} \Fk\left(u, h(u, s) \right) \D s.
$$
We now define an equidistant grid $(t_k)_{0 \le k \le n}$ with mesh size $\Delta$ such that $t_k = k \Delta$.
For each $k = 0, \ldots, n$,  we approximate~$h(\cdot, \cdot)$ by 
$\widehat{h}(u, t_0) := 0$ and 
$$
\widehat{h}(u, t_k) := \frac{1}{\Gamma(\alpha)} \int_0^{t_k} (t_k - s)^{\alpha - 1} \widehat{g}(u, s) \D s, \qquad \text{for }k=1, \ldots, n,
$$
where, for $t \in [t_j, t_{j + 1})$, with $0\leq j\leq k-1$, 
$$
\widehat{g}(u, t) := \frac{t_{j+1} - t}{t_{j + 1} - t_j} \Fk\left(u, \widehat{h}(u, t_j)\right)
 + \frac{t - t_j}{t_{j + 1} - t_j} \Fk\left(u, \widehat{h}(u, t_{j+1})\right)
$$
is a linear interpolation function.
Therefore
\begin{equation}\label{eq:Hhat}
\widehat{h}(u, t_k) = \sum_{j = 0}^{k - 1} a_{j, k} \Fk\left(u, \widehat{h}(u, t_j)\right) + a_{k, k} \Fk\left(u, \widehat{h}(u, t_k)\right),
\end{equation}
with
\begin{equation*}
\left\{
  \begin{array}{lr}
\displaystyle    a_{0, k} = \frac{\Delta^{\alpha}}{\Gamma(\alpha + 2)} \left((k - 1)^{\alpha+1} - (k - \alpha - 1) k^{\alpha} \right),\\
\displaystyle    a_{j, k} =  \frac{\Delta^{\alpha}}{\Gamma(\alpha + 2)} \left[(k - j + 1)^{\alpha + 1} + (k - j - 1)^{\alpha + 1} - 2 (k - j)^{\alpha + 1} \right], 1 \le j \le k  -1,\\
\displaystyle	a_{k, k} = \frac{\Delta^{\alpha}}{\Gamma(\alpha + 2)}.
  \end{array}
\right.
\end{equation*}
Since~$\widehat{h}(u, t_k)$ appears on both sides of~\eqref{eq:Hhat}, it is an implicit scheme which we approximate by an explicit scheme.
Consider a predictor~$\widehat{h}^P(u, t_k)$ derived from the Riemann sum approximating~$\widehat{h}(a, t_k)$ by the integral
$$
\widehat{h}^P(u, t_k) = \frac{1}{\Gamma(\alpha)} \int_0^{t_k} (t_k - s)^{\alpha - 1} \widetilde{g}(u, s) \D s,
$$
with $\widetilde{g}(u, t) := \widehat{g}(u, t_j)$  for $t \in [t_j, t_{j + 1})$, and therefore
$$
\widehat{h}^P(u, t_k) = \sum_{j = 0}^{k - 1} b_{j, k} \Fk\left(u, \widehat{h}(u, t_j)\right)
\qquad\text{with} \qquad
b_{j, k} = \frac{\Delta^{\alpha}}{\Gamma(\alpha + 1)} \left[(k - j)^{\alpha} - (k - j - 1)^{\alpha}\right].
$$
The final scheme therefore reads
$$
\widehat{h}(u, t_k) = \sum_{j = 0}^{k - 1} a_{j, k} \Fk\left(u, \widehat{h}(u, t_j)\right) + a_{k, k} \Fk\left(u, \widehat{h}^P(u, t_k)\right).
$$
Lewis~\cite{Lewis} showed that we can recover Call option prices via inverse Fourier transform as
$$
C(S, T, K) = S - \frac{\sqrt{SK}}{\pi} \int_0^{\infty} \Re\left(\E^{\I u k} \Phi_{T}\left(u - \frac{\I}{2}\right)\right) \frac{\D u}{u^2 + \frac{1}{4}}.
$$

\subsection{Analysis of the algorithm}
We consider the following computer and software specifications:
$$\text{Intel Core i7-6600U, CPU 2.60GHz, 32GB RAM,\quad
Python  3.6.1,\quad
Anaconda 4.4.0,\quad
Tensorflow 1.5.0.
}
$$
and all times below are indicated in seconds.
\subsubsection{Discussion on the number of required networks}
We consider the following parameters for the rough Heston model, without interest rate nor dividend:
\begin{equation}\label{eq:Params}
\kappa = 1, \qquad \nu=0.1, \qquad \alpha = 0.6, \qquad \rho=-0.7, \qquad V_0=0.04, \qquad \theta=0.06, \qquad S_0= 1,
\end{equation}
as well the following network configuration:
we consider $50,000$ Monte Carlo paths, with $200$ time steps.
For the deep learning algorithm, each neural network consists of~$3$ layers with~$5$ neurons each,
the learning rate is set to~$0.2$ and the number of iterations set to~$1000$.
In this example, we discretise the curve~$\Theta$ on a space of functions of dimension~$10$.
We consider~$20$ log-moneynesses ranging from $-0.4$ to~$0.4$, for maturities in $\{0.1, 0.5, 1.6, 5.0\}$ (expressed in years).
The loss function we consider takes into account all strikes for each maturity, and not each single strike individually,
therefore increasing the speed of the algorithm.
Table~\ref{tab:nBSDE} below shows the price errors between the BSDE scheme and the Riccati method,
for different number~$m$ of BSDE time steps, corresponding in fact to different numbers of neural networks.
Surprisingly at first, increasing the number of BSDE time steps ($m$)--clearly more computationally intensive--does not improve the accuracy. 
This can be explained by the fact that a larger number of networks implies more parameters to optimise over,
and therefore reduces accuracy.
This leads us to advocate an algorithm between the one by E, Han, Jentzen~\cite{JentzenDeep},
where the number of networks is equal to the number of Monte Carlo time steps,
and the one by Chan-Wai-Nam, Mikael, Warin~\cite{Nam}, who consider a single network.
There, \emph{Build time} corresponds to the time to build the network, \emph{train time} is the training time,
\emph{DL~Mean~|Error|} is the mean absolute error (across all strikes) between the deep learning price and the Riccati price (computed with~$100$ discretisation steps), 
and \emph{DL~Max~|Error|} is the max absolute errors across all strikes.
As a comparison, the errors and computation times between the Monte Carlo, the Riccati and the Deep Learning prices (with $m=4$) are detailed in Tables~\ref{tab:CompareRicMC_times}-\ref{tab:CompareRicMC}, 
where \emph{MC~Mean~|Error|} and \emph{MC~Max~|Error|} have analogous meanings,
and the detailed prices are given in Figures~\ref{tab:RefPrice_T01}-\ref{tab:RefPrice_T05}-\ref{tab:RefPrice_T16}-\ref{tab:RefPrice_T5}
to help interpret the quantities.
There, we only plot one figure (on the left) for the prices (the Monte Carlo prices) as the others (Riccati and DL) are indistinguishable.

\begin{table}[!htb]
\centering
\begin{tabular}{ccccccc}
\hline
\rowcolor{Gray}
 &  & T & Build time & Train time& DL Mean |Error| & DL Max |Error|\\
\hline
\multirow{4}{*}{\rotatebox{90}{$m=2$}} 
& & 0.1& 31.31 & 24.28 & 3.2E-4 & 2.2E-3\\
& & 0.5& 26.10 & 25.49 & 5E-6 & 2E-4\\
& & 1.6& 26.78 & 21.55 & 2.7E-3 & 4.4E-3\\
& & 5& 25.91 & 20.52 & 2.4E-2 & 2.7E-2\\
\hline
\multirow{4}{*}{\rotatebox{90}{$m=4$}} 
& & 0.1& 43.48 & 23.30 & 4.5E-4 & 2.3E-3\\
& & 0.5& 43.36 & 23.98 & 9.8E-4 & 1.2E-3\\
& & 1.6& 53.54 & 33.87 & 5.9E-3 & 7.4E-3\\
& & 5& 47.12 & 26.69 & 3.9E-2 & 4.1E-2\\
\hline
\multirow{4}{*}{\rotatebox{90}{$m=5$}} 
& & 0.1& 46.46 & 25.77 & 3.4E-4 & 2.2E-3\\
& & 0.5& 44.12 & 30.70 & 7.3E-4 & 9E-4\\
& & 1.6& 40.72 & 21.98 & 1.5E-3 & 2.9E-3\\
& & 5& 44.25 & 21.03 & 6.8E-3 & 8.6E-3\\
\hline
\multirow{4}{*}{\rotatebox{90}{$m=10$}} 
& & 0.1& 64.14 & 21.06 & 2.8E-4 & 2.2E-3\\
& & 0.5& 65.81 & 20.38 & 1.5E-2 & 1.6E-2\\
& & 1.6& 72.67 & 23.14 & 2.3E-2 & 2.5E-2\\
& & 5& 68.30 & 21.04 & 4.6E-2 & 4.8E-2\\
\hline
\multirow{4}{*}{\rotatebox{90}{$m=20$}} 
& & 0.1& 132.06 & 24.247  & 2.3E-2 & 2.5E-2\\
& & 0.5& 132.24 & 27.149  & 5.2E-2 & 5.2E-2\\
& & 1.6& 133.12 & 22.513  & 1.3E-1 & 1.4E-1\\
& & 5& 124.92 & 20.067  & 1E-1 & 1E-1\\
\hline
\end{tabular}
\caption{Price errors computed via the deep learning algorithm with different ($m$) BSDE steps}
\label{tab:nBSDE}
\end{table}

\begin{table}[!htb]
\centering
\begin{tabular}{c|cccccc}
\hline
\rowcolor{Gray}
& MC time & Riccati time\\
\hline
T=0.1& 133 & 54\\
T=0.5& 138 & 57\\
T=1.6& 138 & 54\\
T=5& 140 & 55\\
\hline
\end{tabular}
\caption{Monte Carlo and Riccati computation times (in seconds)}
\label{tab:CompareRicMC_times}
\end{table}

\begin{table}[!htb]
\centering
\begin{tabular}{c|cccccc}
\hline
\rowcolor{Gray}
& MC Mean |Error| & MC Max |Error| & DL Mean |Error| & DL Max |Error|\\
\hline
T=0.1 & 5E-6 & 2.2E-3 & 2.5E-5 & 2.2E-2\\
T=0.5 & 1.4E-4 & 4E-4 & 1.8E-4 & 5E-4\\
T=1.6 & 1.2E-3 & 2.7E-3 & 9.1E-4 & 1.6E-2\\
T=5 & 3.7E-3 & 5.6E-3 & 6.8E-4 & 3.8E-2\\
\hline
\end{tabular}
\caption{Monte Carlo and DL (with $m=4$) price errors}
\label{tab:CompareRicMC}
\end{table}

\begin{figure}
\centering
\includegraphics[scale=0.5]{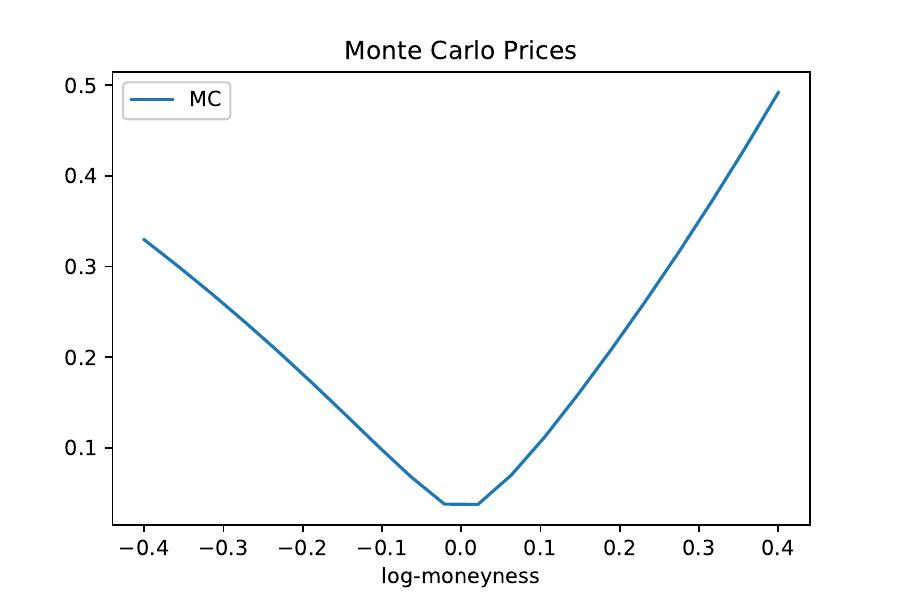}
\includegraphics[scale=0.5]{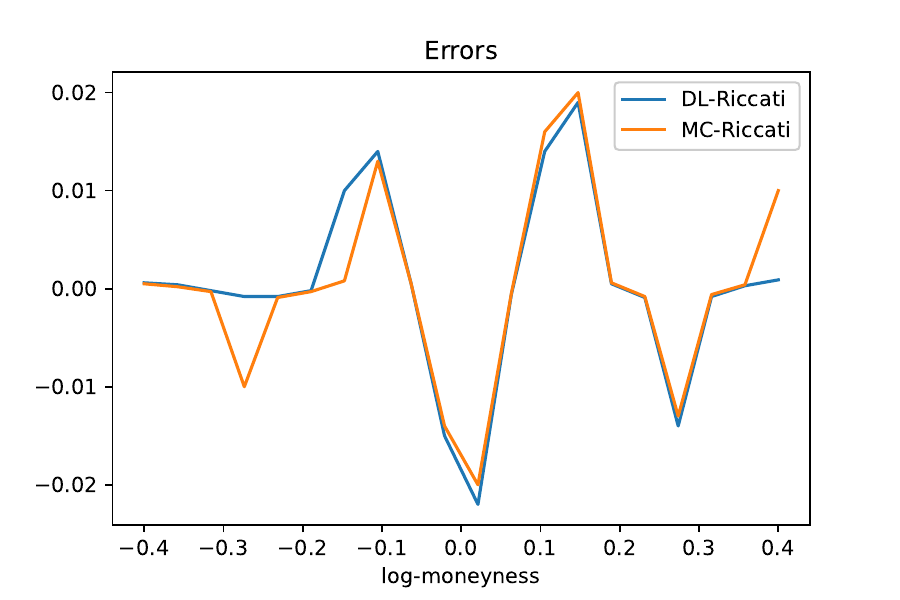}
\caption{Prices for $T = 0.1$}
\label{tab:RefPrice_T01}
\end{figure}

\begin{figure}
\centering
\includegraphics[scale=0.5]{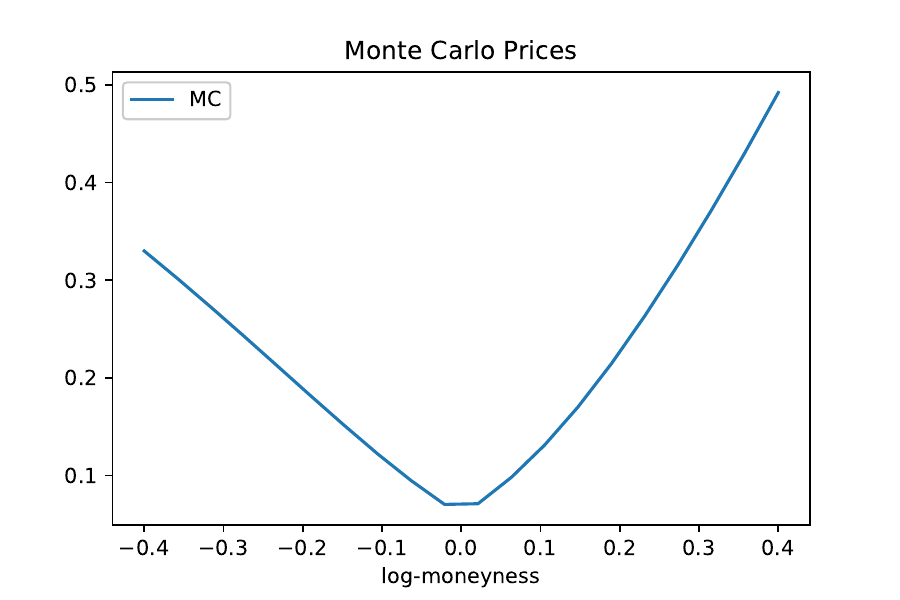}
\includegraphics[scale=0.5]{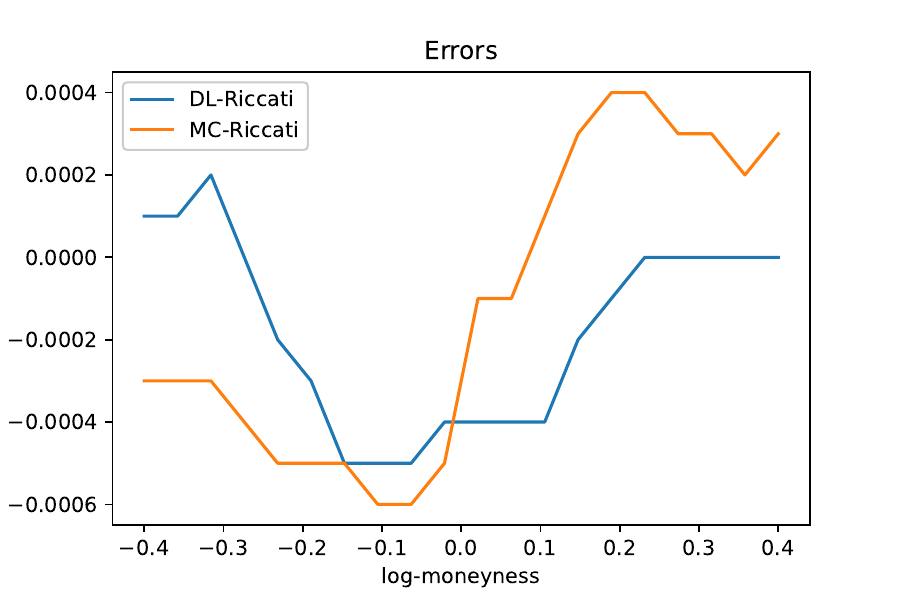}
\caption{Prices for $T = 0.5$}
\label{tab:RefPrice_T05}
\end{figure}

\begin{figure}
\centering
\includegraphics[scale=0.5]{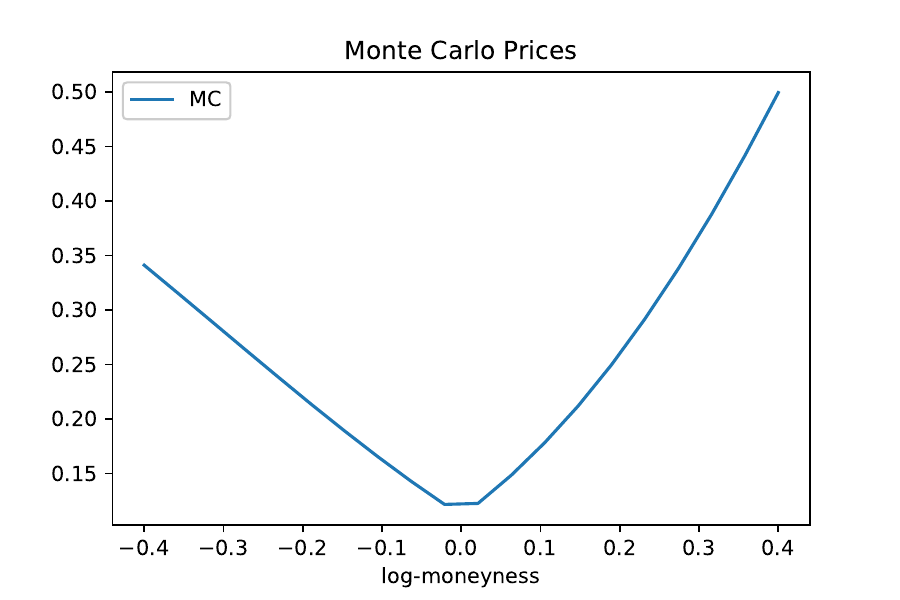}
\includegraphics[scale=0.5]{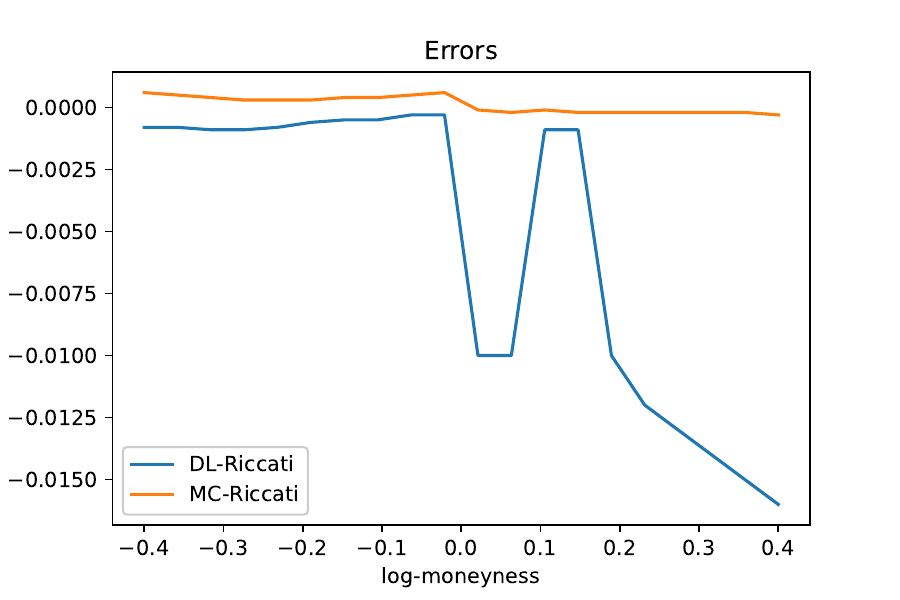}
\caption{Prices for $T = 1.6$}
\label{tab:RefPrice_T16}
\end{figure}

\begin{figure}
\centering
\includegraphics[scale=0.5]{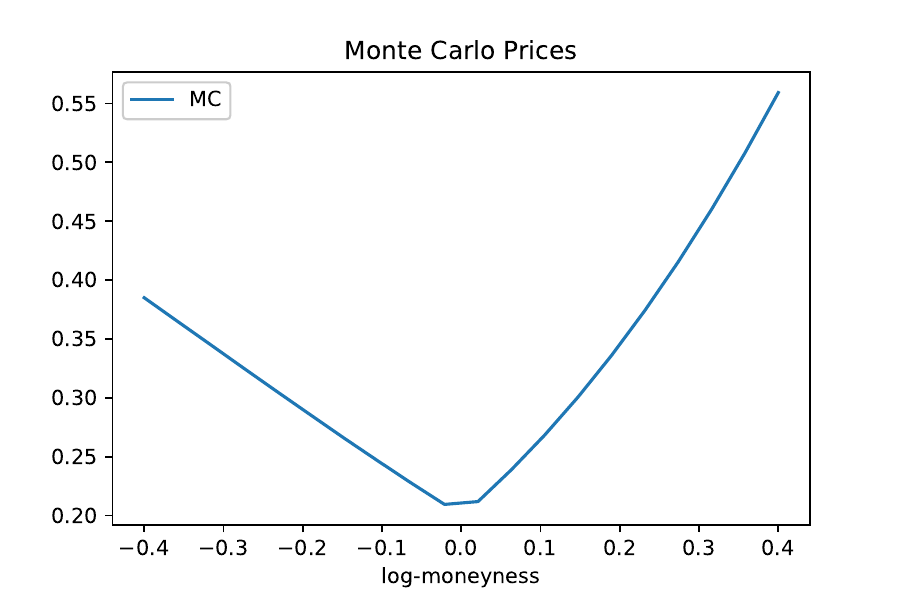}
\includegraphics[scale=0.5]{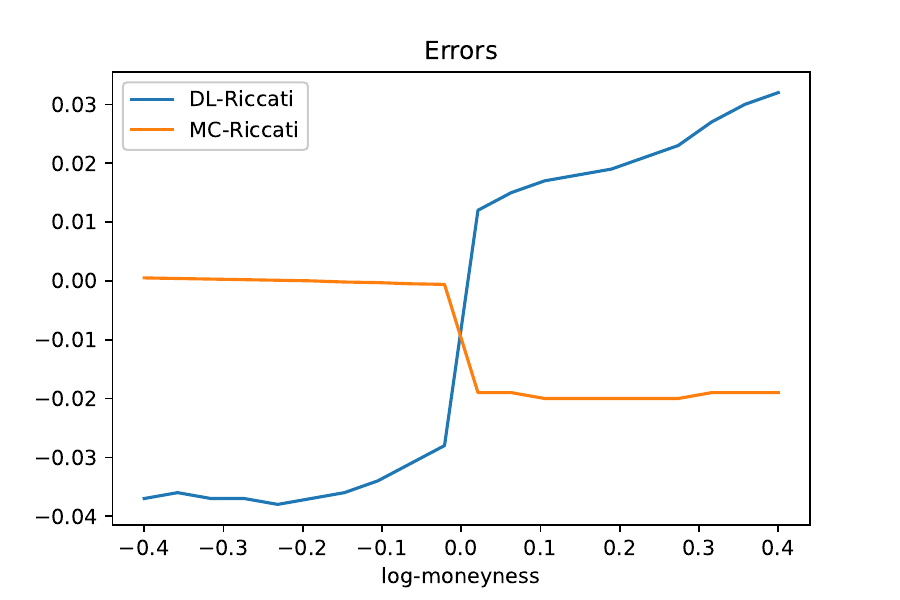}
\caption{Prices for $T = 5$}
\label{tab:RefPrice_T5}
\end{figure}

\newpage
\subsubsection{Shapes of the smile}
We now consider the following set of parameters (we have bumped the volatility of volatility and the correlation on purpose to capture the level of the skew on Equity markets):
\begin{equation}\label{eq:Params2}
\kappa = 1, \qquad \nu=0.9, \qquad \alpha = 0.6, \qquad \rho=-0.8, \qquad V_0=0.04, \qquad \theta=0.06, \qquad S_0= 1.
\end{equation}
We shall consider two maturities, and, with the same hyper-parameters as above, amended with the configurations given in Table~\ref{tab:Config}
(\textit{Networks} is the number of BSDE steps,
\textit{Neurons} stands for the number of neurons per layer, 
and \textit{Layers} is the number of layers).
The implied volatility smiles are plotted in in Figure~\ref{fig:Smiles06}, 
where we added the smiles computed with the Adams scheme for the Riccati equation with~$400$ discretisation steps. 
It is well known that the Riccati version is not so accurate for small maturities and very steep skews as in our example here 
(with the parameters in~\eqref{eq:Params2}), and one could for example use the 
power series algorithm proposed by Callegaro, Grasselli and Pag\`es~\cite{Callegaro}.
The conclusion here is that one should overall be parsimonious with the number of hyper-parameters:
increasing the number of layers or the number of neurons per layer is far from optimal, 
and a few of them are enough for sufficient accuracy.
We leave for future research and deeper numerical analysis the study and convergence of such an algorithm 
for path-dependent options.
In this case, we believe that the flexibility of our network compared to the competitors is key, 
as one can match the BSDE time steps (hence number of networks) with the path-dependent constraints of the setup:
early exercise features for American options, no-exercise periods for convertible bonds for example.

\begin{table}[ht!]
\centering
\begin{tabular}{c|cccccc}
\hline
\rowcolor{Gray}
 &  & Config ID & Networks (m) & Neurons & Layers \\
\hline
\multirow{4}{*}{\rotatebox{90}{$T=0.6$}} 
& & 1 & 4 & 6 & 4\\
& & 2 & 4 & 3 & 2\\ 
& & 3 & 2 & 3 & 4\\ 
& & 4 & 4 & 10 & 4\\ 
\hline
\multirow{4}{*}{\rotatebox{90}{$T=0.2$}} 
& & 1 & 5 & 6 & 4\\
& & 2 & 3 & 10 & 2\\ 
& & 3 & 3 & 6 & 6\\ 
& & 4 & 2 & 6 & 4\\ 
\hline
\end{tabular}
\caption{List of configurations}
\label{tab:Config}
\end{table}

\begin{figure}[h!]
  \includegraphics[scale=0.4]{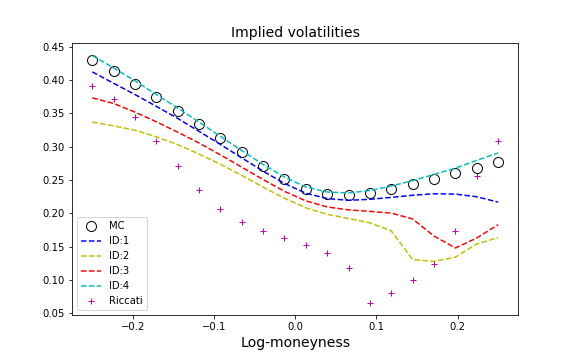}
  \includegraphics[scale=0.4]{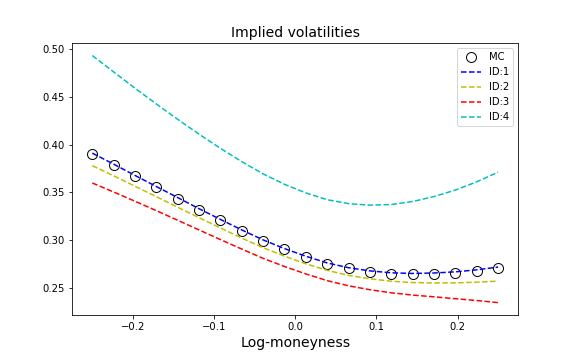}
  \caption{Implied volatility smiles for different configurations for $T=0.2$ (left) and $T=0.6$ (right).}
  \label{fig:Smiles06}
\end{figure}

\begin{remark}
As Figure~\ref{fig:Smiles06} shows, it is not easy to specify the optimal configuration (number of layers and number of nodes) for the neural network.
With enough computing power, the optimisation would perform strongly, 
so one would be inclined to add more nodes and layers. 
However, the dimension of the optimisation problem grows and may yield to multiple optima, 
which each may be different depending on the maturity of the option.
This is a standard problem with neural networks and a common practical recommendation is to keep a low number of layers (such as Config~ID:~3) and increase the number of nodes.
One could (should) add an extra layer in order to make this more robust, such as pruning or growing the network or using an adaptive framework as in~\cite{AdaNet}.
We leave this additional layer to future research and numerical tests.
\end{remark}

\newpage
\appendix

\section{Functional It\^o formula for stochastic Volterra systems}\label{sec:Review}
We gather here several results from Viens and Zhang~\cite{Viens}, which are key to our analysis.
\subsection{Stochastic Volterra systems}
We consider here a stochastic Volterra process of the form
\begin{equation}\label{eq:SDE}
\Xm_t = \Xm_0 + \int_{0}^{t}\bm(t,r,\Xm_{\cdot})\D r + \int_{0}^{t}\sm(t,r,\Xm_{\cdot})\cdot\D \Wm_r,
\end{equation}
where $\Xm_0 \in \RR^d$, $\Wm$ is a Brownian motion in~$\RR^n$
on a given filtered probability space $(\Omega, \Ff, (\Ff_t)_{t\geq 0}, \PP)$,
and~$\bm:\RR_+\times\RR_+\times\RR^d\to\RR^d$ and $\sm:\RR_+\times\RR_+\times\RR^d\to\RR^{d\times n}$ satisfy the following assumptions:
\begin{assumption}\label{assu}
The processes~$\bm$ and~$\sm$ are adapted to~$(\Ff_t)_{t\geq 0}$,
and the derivatives~$\partial_{t}\bm$ and~$\partial_{t}\sm$ exist.
Furthermore, for $\varpb\in\{\bm, \sm, \partial_{t}\bm, \partial_{t}\sm\}$,
$|\varpb(t,s,\omega)| \leq C\left(1+\|\omega\|_T^a\right)$ for some $C, a>0$.
\end{assumption}
Here, $\|\omega\|_T := \sup_{0\leq t \leq T}|\omega_t|$ denotes the supremum norm
on the interval~$[0, T]$.
Saying that~$\varpb$ is adapted here is equivalent to the fact that it can be written as 
$\varpb(t,r,\Xm_{\cdot}) = \varpb(t,r,\Xm_{r\wedge\cdot})$.
The following assumption ensures that the system~\eqref{eq:SDE} is well defined in the following sense:
\begin{assumption}\label{assu2}
The SDE~\eqref{eq:SDE} admits a weak solution and 
$\EE\left[\sup_{t\in[0,T]}|\Xm_t|^p\right]$ is finite for any~$p\geq 1$.
\end{assumption}
This assumption follows~\cite{Viens}.
We do not require strong solutions, as the noise~$W$ is not observable, 
and only~$\Xm$ is (or at least some of its components, for example~$S$ in the rough Heston model).
We refer the reader to~\cite{Berger1, Berger2} for precise conditions
on~$\bm$ and~$\sm$ ensuring weak existence of a solution.
The moment condition is more technical and needed for the functional It\^o formula in 
Theorem~\ref{thm:Ito}.
Most interesting models in the finance literature satisfy these assumptions, 
and we refer to~\cite[Appendix]{Viens} for sufficient conditions
ensuring Assumption~\ref{assu2}, 
in particular for the class of rough affine models~\cite{AffineVolterra}.
The key differences between~\eqref{eq:SDE} and a classical stochastic differential equation is that both drift and diffusion  depend
(a) on two time variables (thus violating the flow property), 
and~(b) on the whole path.
The other classical issue is that the coefficients may blow up, 
as for the Riemann-Liouville fractional Brownian motion
$\int_{0}^{t}(t-s)^{H-1/2}\D W_s$,
where the power-law kernel explodes on the diagonal whenever the Hurst exponent~$H$ lies in $(0, \frac{1}{2})$.
Following the terminology introduced in~\cite{Viens}, two cases have to be distinguished:
\begin{assumption}\label{assu:Kernel}\ 
\begin{enumerate}[(i)]
\item (Regular case) 
For any $s \in [0,T]$, $\dt \bm(t, s, \cdot)$ and $\dt \sm(t, s, \cdot)$ exist on $[s, T]$,
and for $\varpb\in\{\bm, \sm, \dt \bm, \dt \sm\}$,
$$
\left|\varpb(t,s,\omega)\right|\leq C\left(1+\|\omega\|_{T}^a\right),
\qquad\text{for some }a, C>0;
$$
\item (Singular case) 
Let $\varpb\in\{\bm, \sm\}$.
For any $s \in [0,T]$, $\dt \varpb(t, s, \cdot)$ exists on $(s, T]$,
and there exists $h\in\left(0, \frac{1}{2}\right)$ such that, for some $a, C>0$,
$$
\left|\varpb(t,s,\omega)\right|\leq C\left(1+\|\omega\|_{T}^a\right)(t-s)^{h-1/2}
\qquad\text{and}\qquad
\left|\dt\varpb(t,s,\omega)\right|\leq C\left(1+\|\omega\|_{T}^a\right)(t-s)^{h-3/2}.
$$
\end{enumerate}
\end{assumption}
The first case mainly deals with the path dependence and the absence of the Markov property, 
while the second one allows us to treat the presence of two time variables in the kernel, 
which occurs in fractional models, and in particular in the setting of Section~\ref{sec:Modelling} above.
For any $0\leq t \leq u$, we can decompose~\eqref{eq:SDE} as
\begin{equation}\label{eq:ThetaViens}
\Xm_u 
 = \underbrace{\Xm_0 + \int_{0}^{t}\bm(u,r,\Xm_{r\wedge\cdot})\D r + \int_{0}^{t}\sm(u,r,\Xm_{r\wedge\cdot})\D \Wm_r}_{\displaystyle \Theta_{u}^{t}\in\Ff_t}
 + \underbrace{\int_{t}^{u}\bm(s,r,\Xm_{r\wedge\cdot})\D r + \int_{t}^{u}\sm(u,r,\Xm_{r\wedge\cdot})\D \Wm_r}_{\displaystyle I_{u}^{t}\notin\Ff_t}.
\end{equation}
We further recall (from~\cite{Viens}) the concatenation notation of the paths~$\Xm$ 
and~$\Theta^t$ before and after time~$t$,
\begin{equation}\label{eq:Concat}
\left(\Xm\ot\Theta^t\right)_{u} := \Xm_u\,\ind_{\{0<u<t\}} + \Theta^t_u\,\ind_{\{t<u<T\}},
\qquad\text{for any }u, t\in [0, T].
\end{equation}

\subsection{Functional It\^o calculus}\label{sec:FuncIto}
For any $t \in [0,T]$, let~$\Dd_t$ and ~$\Cc_t$ denote respectively 
the space of c\`adl\`ag functions on $[t, T]$ and that of continuous functions on $[t, T]$, as well as 
$$
\Lo := \left\{(t, \omega) \in [0,T] \times \Dd_0: \omega_{[t,T]} \in \Cc_t \right\}
\qquad\text{and}\qquad
\Lambda := [0, T]\times \Cc\left([0,T], \RR^d\right),
$$
where~$\omega_{[t,T]}$ refers to the truncation of the path~$\omega$ to the interval $[t, T]$.
We denote by~$\Cc(\Lo)$ the space of all functions on~$\Lo$, continuous with respect to
the distance function
$\dist((t, \omega), (t', \omega')) : = |t-t'| + \|\omega-\omega'\|_T$.
For a given $u\in \Cc(\Lo)$, we define its (right) time derivative as
$$
\dt u(t, \omega) := \lim_{\eps\downarrow 0}\frac{u(t+\eps, \omega) - u(t, \omega)}{\eps},
\qquad\text{for all } (t, \omega) \in \Lo.
$$
Following~\cite{Viens}, we then define spatial derivatives of~$u\in\Cc(\Lo)$ as linear or bilinear operators on~$\Cc_t$:
\begin{definition}\label{def:FuncDeriv}
The spatial derivatives of~$u\in\Cc(\Lo)$ are defined as Fr\'echet derivatives.
For any $(t, \omega)\in\Lo$, 
\begin{equation*}
\begin{array}{rll}
\la \dom u(t, \omega), \eta\ra
&  := \displaystyle \lim_{\eps\downarrow 0}\frac{u(t, \omega + \eps \eta_{[t,T]}) - u(t, \omega)}{\eps},
 & \text{for any }\eta\in\Cc_t,\\
\la\domom u(t, \omega), (\eta,\zeta)\ra
&  := \displaystyle \lim_{\eps\downarrow 0}\frac{\la\dom u(t, \omega + \eps \eta_{[t,T]}), \zeta\ra
 - \la\dom u(t, \omega), \zeta\ra}{\eps},
 & \text{for any }\eta, \zeta \in\Cc_t.
\end{array}
\end{equation*}
\end{definition}
This definition of the spatial derivative in the direction $\eta\in\Cc_t$ is obviously equivalent to
$$
\la\dom u(t, \omega), \eta\ra
 = \frac{\D}{\D \eps} \left.u\left(t, \omega + \eps \eta_{[t,T]}\right)\right|_{\eps=0}.
$$
This definition is consistent with that of Dupire~\cite{Dupire}, as the perturbation
acts on the time interval $[t, T]$, but not on $[0,t]$,
and the distance function~$\dist(\cdot)$ is similar to Dupire's pseudo-distance (see also~\cite{RenTZ}).
We shall further need the following two spaces:
\begin{align*}
\Cc^{1,2}(\Lo) & := \left\{u \in \Cc(\Lo): \varpb \in \Cc(\Lo) \text{ for }
\varpb \in \{\partial_t u, \partial_{\omega}u, \partial_{\omega}^{2}u\} \right\},\\
\Cc^{1,2}_{+}(\Lo) & := \left\{u \in \Cc^{1,2}(\Lo): 
\varpb \text{ has polynomial growth}  \text{ for }
\varpb \in \{\partial_t u, \partial_{\omega}u, \partial_{\omega}^{2}u\}\right.\\
& \qquad\left.\text{and }
\la\domom u, (\eta,\eta)\ra
\text{ is locally uniformly continuous in~$\omega$ with polynomial growth} \right\}.
\end{align*}
The definition of polynomial growth here is as follows:
\begin{definition}[Definition 3.3 in~\cite{Viens}]
Let $u\in\Cc(\Lo)$ such that~$\dom u$ is well defined on~$\Lo$.
The functional~$\dom u$ is said to have polynomial growth if 
$$
\left|\la \dom u(t, \omega), \eta\ra\right| \leq C\left(1+\|\omega\|_T^\alpha\right)\|\eta_{[t,T]}\|_T,
\qquad\text{for all }(t, \omega)\in\Lo, \eta\in\Cc_t,
$$
for some $C, \alpha>0$.
It is continuous if 
$\Lo\ni (t, \omega)\mapsto \la \dom u(t, \omega), \eta\ra$ is continuous under~$\dist$ for every $\eta\in\Cc$.
\end{definition}

We now recall the main result by Viens and Zhang~\cite[Theorem 3.10 and Theorem 3.17]{Viens},
extending the It\^o formula to the stochastic Volterra framework, 
for both regular and singular cases.
The issue with the singular case (Definition~\ref{assu:Kernel}(ii)) is 
that the coefficients~$\bm$ and~$\sm$ do not belong to~$\Cc_t$ any longer,
so that the Fr\'echet derivatives in Definition~\ref{def:FuncDeriv} do not make sense any more.
In order to develop an It\^o formula, those need to be amended.
We refer the reader to~\cite[Definition 3.16]{Viens} for a precise definition of the space~$\Cc_{+\beta}^{1,2}(\Lo)$, where $\beta \in (0,1)$ intuitively monitors the rate of explosion on the (time) diagonal.

\begin{theorem}\label{thm:Ito}
For $t \in [0, T]$, define $\Zm^t := \Xm\ot\Theta^t$,
and let $\varpb^{t, \omega} := \varpb(\cdot, t, \omega)$ for $\varpb\in\{\bm, \sm\}$
to emphasise the time dependence of the coefficients.
Under Assumptions~\ref{assu}-\ref{assu2}, the following It\^o formula holds:
\begin{equation*}
\begin{array}{rl}
\D u\left(t, \Zm^t\right)
 =  & \displaystyle \left(\dt u\left(t, \Zm^t\right)  + \la \dom u\left(t,\Zm^t\right), \bm^{t,\Xm}\ra
  + \frac{1}{2}\la\domom u\left(t,\Zm^t\right), \left(\sm^{t,\Xm}, \sm^{t,\Xm}\right)\ra\right)\D t\\
 &  \displaystyle + \la\dom u\left(t,\Zm^t\right), \sm^{t,\Xm}\ra\D \Wm_t,
\end{array}
\end{equation*}
\begin{enumerate}
\item in the regular case (Assumption~\ref{assu:Kernel}(i)), whenever $u\in\Cc_+^{1,2}(\Lambda)$;
\item in the singular case (Assumption~\ref{assu:Kernel}(ii))
for $u\in\Cc_{+, \beta}^{1,2}(\Lambda)$ with $\beta+h-\frac{1}{2}>0$, 
where the spatial derivatives should be understood in the regularised sense:
$$
\la \dom u(t, \omega), \phi\ra := \lim_{\delta\downarrow 0}\la \dom u(t, \omega), \phi^{\delta}\ra
\qquad\text{and}\qquad
\la \domom u(t, \omega), (\phi, \phi)\ra := \lim_{\delta\downarrow 0}\la \domom u(t, \omega), (\phi^{\delta}, \phi^{\delta})\ra,
$$
with  the truncated function $\phi^\delta(t,s, \omega) := \phi\left(t\vee (s+\delta), s, \omega\right)$.
\end{enumerate}
\end{theorem}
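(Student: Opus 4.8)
The statement is recalled essentially verbatim from~\cite[Theorems~3.10 and~3.17]{Viens}, so the shortest route is to match Assumptions~\ref{assu}--\ref{assu:Kernel} to the hypotheses there and invoke those results; for concreteness, here is the structure of the argument one would reproduce. The plan is a discretise-and-telescope argument. Fix a partition $\pi:0=t_0<\dots<t_N=t$ of mesh $|\pi|$, set $\Delta t_i:=t_{i+1}-t_i$, $\Delta\Wm_i:=\Wm_{t_{i+1}}-\Wm_{t_i}$, $\Delta^i\Zm:=\Zm^{t_{i+1}}-\Zm^{t_i}$, and write
\[
u(t,\Zm^t)-u(0,\Zm^0)=\sum_{i=0}^{N-1}\Big[u(t_{i+1},\Zm^{t_{i+1}})-u(t_i,\Zm^{t_{i+1}})\Big]+\sum_{i=0}^{N-1}\Big[u(t_i,\Zm^{t_{i+1}})-u(t_i,\Zm^{t_i})\Big].
\]
The first sum is a pure time increment at frozen path: by definition of $\dt u$ and its continuity on $\Lo$, after replacing $\Zm^{t_{i+1}}$ by $\Zm^{t_i}$ up to an error controlled by the $L^p$-modulus of continuity of $s\mapsto\Zm^s$ (which vanishes thanks to Assumption~\ref{assu2}), it converges to $\int_0^t\dt u(s,\Zm^s)\,\D s$. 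The second sum carries all of the It\^o structure and is treated next.

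For each fixed level $v$, the decomposition~\eqref{eq:ThetaViens} exhibits $s\mapsto\Theta^s_v$ as an It\^o semimartingale with differential $\bm(v,s,\Xm_{s\wedge\cdot})\,\D s+\sm(v,s,\Xm_{s\wedge\cdot})\,\D\Wm_s$, so, away from the diagonal, $\Delta^i\Zm\approx\bm^{t_i,\Xm}\,\Delta t_i+\sm^{t_i,\Xm}\,\Delta\Wm_i$ with $\varpb^{t,\Xm}=\varpb(\cdot,t,\Xm_{t\wedge\cdot})$, the error being higher order by the time-regularity of the coefficients (Assumption~\ref{assu}); the concatenation mismatch on the short slice $[t_i,t_{i+1})$, where $\Xm$ stands in for $\Theta^{t_i}$ with equality at the right endpoint since $\Theta^s_s=\Xm_s$, is negligible in sup-norm. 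Taylor-expanding $u$ to second order in the path direction with the Fr\'echet derivatives of Definition~\ref{def:FuncDeriv},
\[
u(t_i,\Zm^{t_{i+1}})-u(t_i,\Zm^{t_i})=\la\dom u(t_i,\Zm^{t_i}),\Delta^i\Zm\ra+\tfrac12\la\domom u(t_i,\Zm^{t_i}),(\Delta^i\Zm,\Delta^i\Zm)\ra+R_i,
\]
substituting the semimartingale expansion, and using the usual It\^o bookkeeping ($\Delta\Wm_i(\Delta\Wm_i)^{\top}$ averaging to $\mathrm{Id}\,\Delta t_i$, while $\Delta t_i\,\Delta\Wm_i$ and $(\Delta t_i)^2$ are negligible), one reads off exactly the terms $\la\dom u,\bm^{t,\Xm}\ra\,\D t$, $\frac12\la\domom u,(\sm^{t,\Xm},\sm^{t,\Xm})\ra\,\D t$, and $\la\dom u,\sm^{t,\Xm}\ra\,\D\Wm_t$. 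That the remainders $R_i$ and the mixed and off-diagonal contributions sum to $0$ in the limit is where $u\in\Cc^{1,2}_+(\Lambda)$ enters: polynomial growth and local uniform continuity of $\dom u$ and $\domom u$, combined with $\EE[\sup_{[0,T]}|\Xm_s|^p]<\infty$ for all $p$, make the standard estimates work. This settles case~(1).

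The genuinely delicate point, and the entire content of case~(2), is that under Assumption~\ref{assu:Kernel}(ii) the level-$t$ coefficient $v\mapsto\varpb^{t,\Xm}(v)=\varpb(v,t,\Xm_{t\wedge\cdot})$ blows up like $(v-t)^{h-1/2}$ as $v\downarrow t$, hence does not belong to $\Cc_t$ and cannot be inserted into the Fr\'echet derivatives. I would follow~\cite{Viens} and truncate, setting $\varpb^\delta(t,s,\omega):=\varpb(t\vee(s+\delta),s,\omega)$, which is bounded and yields a regular Volterra system to which case~(1) applies; one then lets $\delta\downarrow0$. The pairings $\la\dom u,\bm^{\delta,t,\Xm}\ra$ and their second-order analogues, where $\bm^{\delta,t,\Xm}:=\bm^\delta(\cdot,t,\Xm_{t\wedge\cdot})$, converge provided $u$ lies in the smaller class $\Cc^{1,2}_{+,\beta}(\Lambda)$, in which $\dom u$ and $\domom u$ are allowed a controlled explosion on the diagonal quantified by $\beta$; the requirement $\beta+h-\frac12>0$ is precisely the integrability condition making the diagonal contributions $\int(v-t)^{h-1/2}(v-t)^{\beta-1}\,\D v$ finite, so the limit exists and defines the regularised derivatives appearing in the statement, while the sharper bound on $\dt\varpb$ in Assumption~\ref{assu:Kernel}(ii) is what keeps the time-expansion errors under control up to the diagonal. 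The main obstacle is therefore this passage to the limit $\delta\to0$ in the singular regime; everything else is a careful but essentially routine discretisation-plus-error-estimate computation.
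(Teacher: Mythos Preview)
The paper does not prove this theorem at all: it is stated in the appendix purely as a recollection of~\cite[Theorems~3.10 and~3.17]{Viens}, with no argument given. Your opening sentence already captures this, and the additional sketch you provide is a faithful high-level outline of the discretise--telescope--pass-to-the-limit strategy used in that reference, so there is nothing to correct or compare.
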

In the theorem, we invoked the space $\Cc_+^{1,2}(\Lambda)$, which represents
the space of functions $u:\Lambda\to\RR$ such that 
there exists $v\in \Cc_+^{1,2}(\Lo)$ for which $v=u$ on~$\Lambda$.
The derivatives are defined similarly as restrictions on~$\Lambda$.

\section{Proof of Theorem~\ref{thm:PDEBSDE}}\label{sec:ProofThmPDEBSDE}
Since the curve-dependent PDE in the theorem is linear,
existence and uniqueness of the solution is well known~\cite{Touzi3}.
We consider a self-financing portfolio~$\Pi$ consisting 
of the derivative~$P$ given in~\eqref{eq:Price}, 
some quantity~$\Delta$ of stock and some other derivative~$\Psi$, i.e.
at any time $t \in [0,T]$,
$$
\Pi_t = P_t - \Delta_t S_t - \gamma_t \Psi_t.
$$
From Theorem~\ref{thm:Ito}, 
we can write a functional It\^o formula for the option price using Assumption~\ref{assu:Functional}
under the pricing measure~$\QQ$:
$$
\D P_t  = \D \Ps\left(t, S_t, \TTheta^t\right)
 = \Aa \Ps\D t
   + l(t, \Xm_t) S_t\dx \Ps\D W_t
  + \xi(V_t)\la \dom \Ps, \Kr^t\ra \D B_t,
$$
where again $\Xm_t = (S_t, V_t)$, and 
$$
\Aa \Ps:= \dt \Ps + r_t S_t \dx \Ps + \frac{l(t, \Xm_t)^2}{2}S_t^2\dxx \Ps
 + \frac{\xi(V_t)^2}{2}\la\domom \Ps, (\Kr^t, \Kr^t) \ra 
 + b(V_t)\la \dom \Ps, \Kr^t\ra + l(t, \Xm_t)\rho\xi(V_t)S_t\la \dxom \Ps, \Kr^t\ra.
$$
Here, the $x$-derivative refers to the classical derivative with respect to the second component~$S_t$,
whereas the $\omega$-derivative is the Fr\'echet-type derivative in the direction given by the one-dimensional  path~$\omega$.
Applying directly Theorem~\ref{thm:Ito} with the SDE~\eqref{eq:XSDE} for the process~$\Xm$, 
we should normally obtain terms of the form $\la \dOm\Ps, \bm^t\ra$
(and similarly for the second and the cross derivatives),
where~$\bm^t_r = \bm(r, t, \Xm_r)$ for $r \in [0,T]$ is in~$\RR^2$
and $\Om = (\omega_1, \omega_2)$ the two-dimensional path.
We can then write
\begin{align*}
\la \dOm\Ps\left(t, S_t, \TTheta^t\right), \bm^t\ra
 & = \mu_t S_t\partial_{\omega_1}\Ps\left(t, S_t, \TTheta^t\right)
 + \la \partial_{\omega_2}\Ps\left(t, S_t, \TTheta^t\right), \Kr^t b(V_t)\ra\\
 & = \mu_t S_t\partial_{x}\Ps\left(t, S_t, \TTheta^t\right)
+ b(V_t)\la \partial_{\omega_2}\Ps\left(t, S_t, \TTheta^t\right), \Kr^t \ra.
\end{align*}
This yields, similarly, for the portfolio~$\Pi$, under~$\QQ$,
\begin{align*}
\D \Pi_t & = \D \Ps - \Delta_t \D S_t - \gamma_t \D \Psi_t\\
 & = \Aa \Ps \D t  + l(t, \Xm_t) S_t\dx \Ps\D W_t
  + \xi(V_t)\la \dom \Ps, \Kr^t\ra \D B_t\\
 &  - \Delta_t\Big(\mu_t S_t \D t  + l(t, \Xm_t)S_t\D W_t\Big)
 - \gamma_t \Big(\Aa \Psi_t \D t  + l(t, \Xm_t) S_t\dx \Psi_t\D W_t
  + \xi(V_t)\la \dom \Psi_t, \Kr^t\ra \D B_t\Big).
\end{align*}
The portfolio is risk free if $\D\Pi_t = r_t \Pi_t \D t $ and the random noise is cancelled, meaning that
\begin{equation}\label{eq:RiskFreePi}
\left\{
\begin{array}{rl}
\Big(\Aa \Ps -\gamma_t \Aa \Psi_t - \Delta_t \mu_t S_t\Big)\D t
 & = r_t\Big(\Ps - \Delta_t S_t - \gamma_t \Psi_t\Big)\D t,\\
 \dx \Ps - \gamma_t \dx \Psi_t - \Delta_t & = 0,\\
\la \dom \Ps, \Kr^t\ra - \gamma_t  \la \dom \Psi_t, \Kr^t\ra & = 0,
\end{array}
\right.
\end{equation}
since both functions $l(\cdot, \cdot, \cdot)$ and~$\xi(\cdot)$ are nowhere null.
The last two equalities yield
$$
\gamma_t = \frac{\la \dom \Ps, \Kr^t\ra}{\la \dom \Psi_t, \Kr^t\ra}
 \qquad\text{and}\qquad
\Delta_t = \dx \Ps - \frac{\la \dom \Ps, \Kr^t\ra}{\la \dom \Psi_t, \Kr^t\ra}\dx \Psi_t.
$$
We can now rewrite the first equality in~\eqref{eq:RiskFreePi} as
$$
\Big(\Aa \Ps -\gamma_t \Aa \Psi_t - \Delta_t r_t S_t\Big)\D t
 = r_t\Big(\Ps - \Delta_t S_t - \gamma_t \Psi_t\Big)\D t,
$$
which is equivalent to
$$
\frac{\left(\Aa - r_t\right) \Ps}{\la \dom \Ps, \Kr^t\ra}
 = \frac{\left(\Aa - r_t\right)\Psi_t}{\la \dom \Psi_t, \Kr^t\ra}.
$$
The left-hand side is a function of~$P$ only, whereas the right-hand side only depends on~$\Psi$.
Therefore, the only way for this equality to hold is for both sides to be equal to some function~$-\widehat{b}$ 
that depends on~$S_t$, $\Theta^t$ and~$t$, but not on~$P$ nor~$\Psi$.
The pricing equation for the price function is therefore
$$
\left(\Aa - r_t\right) \Ps = - \la \dom \Ps, \Kr^t\ra \widehat{b}_t.
$$
Following similar computations in classical (Markovian) stochastic volatility models, 
we consider~$\widehat{b}_t$ of the form $\widehat{b}_t = b(V_t) - \xi(V_t)\lambda_t$, 
where~$\lambda_t$ is called the market price of risk.
The final pricing PDE is therefore
$$
\dt + r_t S_t \dx + \frac{l(t,\Xm_t)^2}{2}S_t^2\dxx 
 + \frac{\xi(V_t)^2}{2}\la\domom, (\Kr^t, \Kr^t) \ra 
 + b(V_t)\la \dom, \Kr^t\ra + l(t, \Xm_t)\rho\xi(V_t)S_t\la \dxom, \Kr^t\ra
 + \la \dom , \Kr^t\ra \widehat{b}_t = r_t.
$$
With a slight abuse of notations, writing~$b$ in place of~$\widehat{b}$ proves the statement.


\section{Simulation of the rough Heston model}\label{appsec:rHestonSimul}
We provide here details about the simulation of the rough Heston model in~\eqref{eq:RoughHeston}.
Introducing the infinite-dimensional process $(\Theta^t)_{t\geq 0}$ as above, we can write, for any
$t\geq 0$ and $u\geq t$,
\begin{equation}\label{eq:ThetaRHeston}
\Theta^t_u = V_0 + \int_{0}^{t}\Kr(u-s)\Big[\kappa(\theta-V_s)\D s + \xi \sqrt{V_s}\D B_s\Big].
\end{equation}
Given a fixed time horizon $T>0$ and a given number of time steps~$n$, we introduce an equidistant grid for the closed interval $[0,T]$
as $t_i = i/n$, for $i=0, \ldots, n$.
Discretising the rough SDE for the variance process in~\eqref{eq:RoughHeston} along this grid, and denoting $V_i = V_{t_{i}}$ for simplicity,
we can write $V_0 = V_0$ and, for any $i=1,\ldots, n$,
\begin{align}\label{eq:DiscrVrHeston}
V_i & = V_0 + \int_{0}^{t_i}\Kr(t_i-s)\Big[\kappa(\theta-V_s)\D s + \xi \sqrt{V_s}\D B_s\Big]\nonumber\\
 & \approx V_0 + \sum_{j=0}^{i-1}\kappa(\theta-V_j) \int_{t_j}^{t_{j+1}}\Kr(t_i-s)\D s
 + \sum_{j=0}^{i-1}\xi \sqrt{V_j} \int_{t_j}^{t_{j+1}}\Kr(t_i-s)\D B_s\nonumber\\
 & = V_0 + \sum_{j=0}^{i-1}\kappa(\theta-V_j) A_{j,i}
 + \sum_{j=0}^{i-2}\xi \sqrt{V_j} \int_{t_j}^{t_{j+1}}\Kr(t_i-s)\D B_s
 + \xi \sqrt{V_{i-1}} \int_{t_{i-1}}^{t_{i}}\,\Kr(t_i-s)\D B_s,
\end{align}
where we freeze the variance process on each subinterval to its left-point value, and single out the singular part of the kernel in the last integral.
We also introduced the quantity
$$
A_{j, i} := \int_{t_j}^{t_{j+1}}\Kr(t_i-s)\D s,
\qquad\text{for }i=1,\ldots, n \text{ and } j=0,\ldots, i-1,
$$
which can be pre-computed and stored. 
Note that, for $i=1$, the middle sum in~\eqref{eq:DiscrVrHeston} does not appear.
Following~\cite{BLP15}, we can write the middle term in the discretisation as
\begin{align}\label{eq:Discr}
\sum_{j=0}^{i-2}\xi \sqrt{V_j} \int_{t_j}^{t_{j+1}}\Kr(t_i-s)\D B_s 
  = \sum_{j=0}^{i-2}\xi \sqrt{V_j}\,\Kr\left(\frac{b^*_{i-j}}{n}\right) \int_{t_j}^{t_{j+1}}\D B_s 
&   = \sum_{k=2}^{i}\xi \sqrt{V_{i-k}}\,\Kr\left(\frac{b^*_{k}}{n}\right) \int_{t_{i-k}}^{t_{i-k+1}}\D B_s \nonumber\\
 & =: \sum_{k=2}^{i}\xi \sqrt{V_{i-k}}\,\Kr\left(\frac{b^*_{k}}{n}\right) \overline{B}_{i-k},
\end{align}
with~$b^*_k$ defined in~\cite[Proposition~2.8]{BLP15} and with 
$\overline{B}_{i} : = \int_{t_{i}}^{t_{i+1}}\D B_s$ for $i=0,\ldots,n-1$.
Finally, for the last term in~\eqref{eq:DiscrVrHeston}, where the singularity occurs, we introduce the vector 
$(\widetilde{B}_{i})_{i=0,\ldots,n-1}$ as
$\widetilde{B}_{i} := \int_{t_{i}}^{t_{i+1}}\Kr(t_{i+1}-s)\D B_s$.
In the notations of~\cite{BLP15}, $\widetilde{B}_{i}$ is denoted $\overline{B}_{i,1}$, but we remove the double index here.
For any $i=0,\ldots,n-1$, the couple $(\overline{B}_{i}, \widetilde{B}_{i})$ 
forms a two-dimensional Gaussian vector,
with covariance matrix~$\Sigma$ given by
$$
\Sigma_{11} = \frac{1}{n},\qquad
\Sigma_{22} = \int_{t_{i}}^{t_{i+1}}\Kr(t_{i+1}-s)^2 \D s,\qquad
\Sigma_{12} = \Sigma_{21} = \int_{t_{i}}^{t_{i+1}}\Kr(t_{i+1}-s) \D s.
$$
Summarising, we discretise the variance process as
\begin{equation}\label{eq:DiscrVSumm}
V_i  = V_0 + \sum_{j=0}^{i-1}\kappa(\theta-V_j) A_{j,i}
 + \sum_{k=2}^{i}\xi \sqrt{V_{i-k}}\,\Kr\left(\frac{b^*_{k}}{n}\right) \overline{B}_{i-k}
 +  \xi \sqrt{V_{i-1}}\, \widetilde{B}_{i-1},
\qquad\text{for }i=1,\ldots, n.
\end{equation}
\begin{remark}
For computational purposes, the steps above can be sped up bearing in mind 
that the matrix $(A_{j, i})_{i,j}$ is a strictly upper triangular Toeplitz matrix
and that the last expression on the right-hand side of~\eqref{eq:Discr}  can be computed as a discrete Fourier transform.
\end{remark}

\begin{remark}\label{rem:HybridPowerLaw}
In the power law case $\Kr(t) = t^{H-\frac{1}{2}}$ with $H\in(0,1)$, the expressions above simplify to
$$
A_{j,i} = \frac{1}{H+\frac{1}{2}}\left\{(t_i-t_j)^{H+\frac{1}{2}} - (t_i - t_{j+1})^{H+\frac{1}{2}}\right\},
$$
\begin{equation}\label{eq:SigmaRHeston}
\Sigma = 
\begin{pmatrix}
\displaystyle \frac{1}{n} & \displaystyle \frac{1}{(H+\frac{1}{2}) n^{H+\frac{1}{2}}}\\
\displaystyle \frac{1}{(H+\frac{1}{2}) n^{H+\frac{1}{2}}} & \displaystyle \frac{1}{2H n^{2H}}
\end{pmatrix},
\end{equation}
and, as shown in~\cite[Proposition 2.8]{BLP15}, the coefficients $(b_k^*)$ are explicitly computed as
\begin{equation}\label{eq:bStartRHeston}
b_k^* := \left(\frac{k^{H+\frac{1}{2}} - (k-1)^{H+\frac{1}{2}}}{H+\frac{1}{2}}\right)^{\frac{1}{H-1/2}}.
\end{equation}
In this case, with the uniform grid $t_i=i/n$, denoting $\widetilde{A}_{i-j} := A_{j,i}$, we can rewrite~\eqref{eq:DiscrVSumm} as
$$
V_i  = V_0 + \sum_{j=0}^{i-1}\kappa(\theta-V_j) \widetilde{A}_{i-j}
 + \sum_{k=2}^{i}\xi \sqrt{V_{i-k}}\,\Kr\left(\frac{b^*_{k}}{n}\right) \overline{B}_{i-k}
 +  \xi \sqrt{V_{i-1}}\, \widetilde{B}_{i-1},
\qquad\text{for }i=1,\ldots, n,
$$
and the vector~$(\widetilde{A}_k)_{k=1, \ldots, n}$ reads
$$
\widetilde{A}_{k} = \frac{k^{H+\frac{1}{2}} - (k-1)^{H+\frac{1}{2}}}{\left(H+\frac{1}{2}\right)n^{H+\frac{1}{2}}}.
$$
\end{remark}

Regarding the process~$\Theta$ in~\eqref{eq:ThetaRHeston}, we discretise it analogously as
$\Theta^i_k = V_k$ whenever $k\leq i$ and, for $k>i$,
\begin{align*}
\Theta^i_k & = V_0 + \int_{0}^{t_i}\Kr(t_k-s)\Big[\kappa(\theta-V_s)\D s + \xi \sqrt{V_s}\D B_s\Big]
  = V_0 + \sum_{j=0}^{i-1}\int_{t_j}^{t_{j+1}}\Kr(t_k-s)\Big[\kappa(\theta-V_s)\D s + \xi \sqrt{V_s}\D B_s\Big]\\
 & \approx V_0 + \sum_{j=0}^{i-1}\kappa(\theta-V_j)\int_{t_j}^{t_{j+1}}\Kr(t_k-s)\D s
+ \xi \sum_{j=0}^{i-1}\sqrt{V_j} \int_{t_j}^{t_{j+1}}\Kr(t_k-s)\D B_s\\
 & = V_0 + \sum_{j=0}^{i-1}\kappa(\theta-V_j) A_{j,k}
+ \xi \sum_{j=0}^{i-1}\sqrt{V_j} \int_{t_j}^{t_{j+1}}\Kr(t_k-s)\D B_s\\
 & = V_0 + \sum_{j=0}^{i-1}\kappa(\theta-V_j) A_{j,k}
+ \xi \sum_{j=0}^{i-1}\sqrt{V_j}\,\Kr\left(\frac{b^*_{k-j}}{n}\right) \overline{B}_{j}\\
 & = \left(V_0 + \sum_{j=0}^{i-2}\kappa(\theta-V_j) A_{j,k}
+ \xi \sum_{j=0}^{i-2}\sqrt{V_j}\,\Kr\left(\frac{b^*_{k-j}}{n}\right) \overline{B}_{j}\right)
+ \kappa(\theta-V_{i-1}) A_{i-1,k} + \xi\sqrt{V_{i-1}}\,\Kr\left(\frac{b^*_{k-(i-1)}}{n}\right) \overline{B}_{i-1}\\
 &  =\Theta^{i-1}_{k} + \kappa(\theta-V_{i-1}) A_{i-1,k} + \xi\sqrt{V_{i-1}}\,\Kr\left(\frac{b^*_{k-(i-1)}}{n}\right) \overline{B}_{i-1}.
\end{align*}
For the stock price, starting from $S_{t_0} = S_0$, 
we use the discretised explicit form, for $i=1, \ldots, n$,
\begin{equation}\label{eq:DiscrS}
S_{i} = S_{i-1} \exp\left\{- \frac{V_{i-1}}{2n}  + \sqrt{V_{i-1}}\, W_{i-1} \right\},
\end{equation}
where $W_i := \int_{t_i}^{t_{i+1}}\D W_s$ for some standard Brownian motion~$W$ such that $\D\langle W, B\rangle_t = \rho\, \D t$.
\begin{remark}
The simulation recipe is as follows:
\begin{itemize}
\item Pre-compute the vector $(b^*_k)$ in~\eqref{eq:bStartRHeston} for $k=1,\ldots, n$;
\item Generate three iid Gaussian samples $\left(\mathrm{N}^{1}_i, \mathrm{N}^{2}_i, \mathrm{N}^{3}_i\right)$ for $i=0,\ldots,n-1$;
\item Recalling the covariance matrix~$\Sigma$ in~\eqref{eq:SigmaRHeston}, compute the Gaussian vector $(\overline{B}_i, \widetilde{B}_i, W_i)_{i=0,\ldots,n-1}$ as
$$
\begin{pmatrix}
\overline{B}_i\\
\widetilde{B}_i\\
W_i
\end{pmatrix}
 = 
\begin{pmatrix}
\Sigma_{11} & \Sigma_{12} & \rho/n \\
\Sigma_{12} & \Sigma_{22} & \rho\Sigma_{12}\\
\rho/n & \rho\Sigma_{12} & 1/n
\end{pmatrix}^{1/2}
\begin{pmatrix}
\mathrm{N}^{1}_{i}\\
\mathrm{N}^{2}_{i}\\
\mathrm{N}^{3}_{i}
\end{pmatrix},
$$
since $\EE[W_i\widetilde{B}_i] = \rho \int_{t_i}^{t_{i+1}}\Kr(t_{i+1}-s)\D s = \rho\Sigma_{12}$,
and the square-root understood in the Cholesky sense;
\item Compute $(V_i)_{i=0,\ldots, n}$ using~\eqref{eq:DiscrVSumm} and $(S_i)_{i=0,\ldots, n}$ using~\eqref{eq:DiscrS};
\item For each $i=0,\ldots, n$, generate the discretised curves $(\Theta^i_k)_{k=0,\ldots,n}$.
\end{itemize}
\end{remark}



\begin{thebibliography}{99}

\bibitem{AbiJ}E. Abi Jaber.
Lifting the Heston model.
\textit{Quantitative Finance}, {\tt 19}(12): 1995-2013, 2019.

\bibitem{AffineVolterra}E. Abi Jaber, M. Larsson, S. Pulido.
Affine Volterra processes.
\textit{Annals of Applied Probability}, {\tt 29}(5): 3155-3200, 2019.

\bibitem{AlosLeon}E. Al\`os, J. Le\'on and J. Vives. 
On the short-time behavior of the implied volatility for jump-diffusion models with stochastic volatility. 
\textit{Finance and Stochastics}, {\tt 11}(4), 571-589, 2007.

\bibitem{ADAM}J. Ba and D. Kingma. 
Adam: a method for stochastic optimization. 
\textit{Proceedings of the International Conference on Learning Representations}, May 2015.

\bibitem{Barles}G.~Barles and PE.~Souganidis.
Convergence of approximation schemes for fully nonlinear second-order equation.
\textit{Asymptotic Analysis}, {\tt 4}: 271-283, 1991.

\bibitem{Bayer1}C. Bayer, C. Ben Hammouda and R. Tempone.
Hierarchical adaptive sparse grids for option pricing under the rough Bergomi model.
\textit{Quantitative Finance}, {\tt 20}(9): 1457-1473, 2020.

\bibitem{BFGMS17}C.~Bayer, P.~Friz, P.~Gassiat, J.~Martin and B.~Stemper.
A regularity structure for rough volatility. 
\textit{Mathematical Finance}, {\tt 30}(3): 782-832, 2020.

\bibitem{BFG15}C.~Bayer, P.~Friz  and J.~Gatheral.
Pricing under rough volatility. 
\textit{Quantitative Finance}, {\tt 16}(6): 1-18, 2015.

\bibitem{BFGHS}C. Bayer, P. Friz, A. Gulisashvili, B. Horvath and B. Stemper.
Short-time near the money skew in rough fractional stochastic volatility models.\textit{Quantitative Finance}, {\tt 19}(5): 779-798, 2019.

\bibitem{DeepL}C. Bayer, B. Horvath, A. Muguruza, B. Stemper and M. Tomas.
On deep calibration of (rough) stochastic volatility models.
\href{https://arxiv.org/abs/1908.08806}{arXiv:1908.08806}, 2019.

\bibitem{BayerSPDE}C. Bayer, J. Qiu and Y. Yao.
Pricing options under rough volatility with Backward SPDEs.
\textit{SIAM Journal on Financial Mathematics}, {\tt 13}(1): 179-212, 2022.

\bibitem{BLP15}M.~Bennedsen, A.~Lunde and M.S.~Pakkanen. 
Hybrid scheme for Brownian semistationary processes. 
\textit{Finance and Stochastics}, {\tt 21}(4): 931-965, 2017.

\bibitem{BLP16}M.~Bennedsen, A.~Lunde and M.S.~Pakkanen. 
Decoupling the short- and long-term behavior of stochastic volatility. 
\textit{Journal of Financial Econometrics}, to appear.

\bibitem{Berger1}M.A. Berger and V.J. Mizel.
Volterra Equations with It\^o Integrals, I.
\textit{Journal of Integral Equations}, {\tt 2}(3): 187-245, 1980.

\bibitem{Berger2}M.A. Berger and V.J. Mizel.
Volterra Equations with It\^o Integrals, II.
\textit{Journal of Integral Equations}, {\tt 2}(4): 319-337, 1980.

\bibitem{Bergomi}L. Bergomi and J. Guyon.
Stochastic volatility’s orderly smiles.
\textit{Risk}, May 2012.

\bibitem{Black}F. Black and M. Scholes.
The pricing of options and corporate liabilities. 
\textit{Journal of Political Econ.}, {\tt 81}(3): 637-654, 1973.

\bibitem{Bouchard}B. Bouchard and N. Touzi.
Discrete time approximation and Monte-Carlo simulation of backward stochastic differential equation. 
\textit{Stochastic Processes and their Applications}, {\tt 111}: 175-206, 2004.

\bibitem{Callegaro}G. Callegaro, M. Grasselli and G. Pag\`es.
Fast hybrid schemes for fractional Riccati equations (rough is not so tough).
\textit{Mathematics of Operations Research}, {\tt 46}(1): 221-254, 2021.

\bibitem{Comte-Renault} F.~Comte and E.~Renault. 
Long memory continuous time models. 
\textit{Journal of Econometrics}, {\tt 73}(1): 101-149, 1996.

\bibitem{Cont}R. Cont.
Modeling term structure dynamics: an infinite dimensional approach.
\textit{IJTAF}, {\tt 8}(3): 1-24, 2005.

\bibitem{AdaNet}
C. Cortes, X. Gonzalvo, V. Kuznetsov, M. Mohri and S. Yang.
AdaNet: adaptive structural learning of artificial neural networks.
\textit{Proceedings of Machine Learning Research}, {\tt 70}: 874-883, 2017.

\bibitem{DaPrato}G. Da Prato and J. Zabczyk.
Stochastic equations in infinite dimensions.
CUP, 2014.

\bibitem{Ustunel}L. Decreusefond and A. Ust\"unel. 
Stochastic analysis of fractional Brownian motion. 
\textit{Potential Analysis}, {\tt 10}: 177-214, 1996.

\bibitem{Djehiche}M. Djehiche and M. Eddahbi.
Hedging options in market models modulated by the fractional Brownian motion.
\textit{Stochastic Analysis and Applications}, {\tt 19}(5): 753-770, 2001.

\bibitem{DupireVol}B. Dupire.
Pricing with a smile.
\textit{Risk}, 1994.

\bibitem{Dupire}B. Dupire.
Functional It\^o Calculus.
\textit{Quantitative Finance}, {\tt 19}(5): 721-729, 2019.

\bibitem{JentzenDeep}W. E, J. Han and A. Jentzen.
Solving high-dimensional partial differential equations using deep learning. 
\textit{Proceedings of the Natural Academy of Sciences}, {\tt 115}: 8505-8510, 2018.

\bibitem{Touzi3}I.~Ekren, N.~Touzi and J.~Zhang.
On viscosity solutions of path-dependent PDEs.
\textit{Annals Proba.}, {\tt 42}(1): 204-236, 2014.

\bibitem{Touzi2}I.~Ekren, N.~Touzi and J.~Zhang.
Viscosity solutions of fully nonlinear parabolic path-dependent PDEs: Part~I.
\textit{Annals of Probability}, {\tt 44}(2): 1212-1253, 2016.

\bibitem{Touzi1}I.~Ekren, N.~Touzi and J.~Zhang.
Viscosity solutions of fully nonlinear parabolic path-dependent PDEs: Part~II.
\textit{Annals of Probability}, {\tt 44}(4): 2507-2553, 2016.

\bibitem{KarouiBSDE}N. El Karoui, S. Peng  and M.C. Quenez.
Backward stochastic differential equations in Finance.
\textit{Math. Fin.},~{\tt 7}(1):~1-71,~1997.

\bibitem{ERChar} O. El Euch and M. Rosenbaum. 
The characteristic function of rough Heston models. 
\textit{Math. Finance}, {\tt 29}(1): 3-38, 2019.

\bibitem{ERHedging} O.~El Euch and M. Rosenbaum. 
Perfect hedging in rough Heston models.
\textit{Annals Applied Proba.}, {\tt 28}(6): 3813-3856, 2018.

\bibitem{EEFR}O. El Euch, M. Fukasawa and M. Rosenbaum.
The microstructural foundations of leverage effect and rough volatility.
\textit{Finance and Stochastics}, {\tt 22}(2): 241-280, 2018.

\bibitem{Roughening}O. El Euch, J. Gatheral and M. Rosenbaum.
Roughening Heston.
\textit{Risk}, April 2019.

\bibitem{Fernique}X. Fernique. 
Int\'egrabilit\'e des vecteurs Gaussiens. 
\textit{CRAS Paris}, {\tt 270}: 1698-1699, 1970.

\bibitem{Forde}M. Forde and H. Zhang.
Asymptotics for rough stochastic volatility models.
\textit{SIAM Fin. Math.}, {\tt 8}: 114-145, 2017.

\bibitem{Fukasawa}M. Fukasawa.
Asymptotic analysis for stochastic volatility: martingale expansion.
\textit{Finance and Stoch.}, {\tt 15}: 635-654, 2011.

\bibitem{Gassiat}P. Gassiat.
On the martingale property in the rough Bergomi model.
\textit{Electronic Comm. Probability}, {\tt 24}(33), 2019.

\bibitem{Yu}N. Ganesan, B. Hientzsch and Y. Yu.
Backward deep BSDE methods and applications to nonlinear problems.
\href{https://arxiv.org/abs/2006.07635}{arXiv:2006.07635}, 2020.

\bibitem{GatheralBook}J.~Gatheral.
The Volatility Surface: a practitioner's guide.
John Wiley \& Sons, 2006.

\bibitem{SSVI}J.~Gatheral and A.~Jacquier.
Arbitrage-free SVI volatility surfaces.
\textit{Quantitative Finance}, {\tt 14}: 59-71, 2014.

\bibitem{Volrough}J.~Gatheral, T.~Jaisson and M.~Rosenbaum.
Volatility is rough.
\textit{Quantitative Finance}, {\tt 18}(6): 933-949, 2018.

\bibitem{AffineFwd}J. Gatheral and M. Keller-Ressel.
Affine forward variance models.
\textit{Finance and Stochastics}, {\tt 23}(3): 501-533, 2019.

\bibitem{Rados}J. Gatheral and R. Radoi\v{c}i\'c.
Rational approximation of the rough Heston solution.
\textit{IJTAF}, {\tt 22}(3), 2019.

\bibitem{GJR14} H.~Guennoun, A.~Jacquier, P.~Roome and F.~Shi. 
Asymptotic behaviour of the fractional Heston model.
\textit{SIAM Journal on Financial Mathematics}, {\tt 9}(3): 1017-1045, 2018.

\bibitem{Guyon}J. Guyon. 
The VIX Future in Bergomi models: Fast approximation formulas and joint calibration with S\&P 500 skew.
\textit{SIAM Journal on Financial Mathematics}, {\tt 13}(4): 1418-1485, 2022.

\bibitem{GuyonPHL}J. Guyon and P. Henry-Labord\`ere. 
Being particular about calibration.
\textit{Risk Magazine}: 92-96, 2012.

\bibitem{Gyongy}I. Gy\"ongy.
Mimicking the one-dimensional marginal distributions of processes vaving an It\^o differential. 
\textit{Probability Theory and Related Fields}, {\tt 71}(4): 501-516, 1986.

\bibitem{Harms}P. Harms.
Strong convergence rates for Markovian representations of fractional Brownian motion.
To appear in \textit{Discrete and Continuous Dynamical Systems Series B}.

\bibitem{HPV17}C.~Heinrich, M.~Pakkanen and AE.D.~Veraart. 
Hybrid simulation scheme for volatility modulated moving average fields.
\textit{Mathematics and Computers in Simulation}, {\tt 166}: 224-244, 2019.

\bibitem{HJL}B. Horvath, A. Jacquier and C. Lacombe.
Asymptotic behaviour of randomised fractional volatility models.
\textit{Journal of Applied Probability},  {\tt 56}(2), 2019.

\bibitem{HJMu}B. Horvath, A. Jacquier and A. Muguruza.
Functional central limit theorems for rough volatility.
\href{https://arxiv.org/abs/1711.03078}{arXiv:1711.03078}, 2018.

\bibitem{HJT18}B. Horvath, A. Jacquier and P. Tankov.
Volatility options in rough volatility models.
\textit{SIFIN}, {\tt 11}(2): 437-469, 2020.

\bibitem{Pham}C. Hur\'e, H. Pham and X.Warin.
Some machine learning schemes for high-dimensional nonlinear PDEs.
\textit{Mathematics of Computation}, {\tt 89}(324): 1547-1580, 2020.

\bibitem{JMM17}A.~Jacquier, C.~Martini and A.~Muguruza.
On VIX futures in the rough Bergomi model.
\textit{Quant. Finance}, {\tt 18}(1): 45-61,~2018.

\bibitem{JPS17}A.~Jacquier, M.~Pakkanen and H.~Stone.
Pathwise large deviations for the rough Bergomi model.
\textit{Journal of Applied Probability}, {\tt 55}(4): 1078-1092, 2018.

\bibitem{Jourdain}B. Jourdain and A. Zhou.
Existence of a calibrated regime switching local volatility model and new fake Brownian motions.
\textit{Mathematical Finance}, {\tt 30}(2): 501-546, 2020.

\bibitem{Karatzas}I. Karatzas and S. Shreve.
Brownian motion and stochastic calculus.
Springer-Verlag, New-York, 1988.

\bibitem{Lacker}D.~Lacker, M.~Shkolnikov and J.~Zhang.
Inverting the Markovian projection, with an application to local stochastic volatility models.
\textit{Annals of Probability}, {\tt 48}(5): 2189-2211, 2020.
  
\bibitem{Lagaris1}I. Lagaris, A. Likas, and D. Fotiadis, Artificial neural networks for solving ordinary and partial differential equations.
\textit{IEEE Transactions on Neural Networks}, {\tt 9}(5): 987-1000, 1998.

\bibitem{LeeNN}H. Lee.
Neural algorithm for solving differential equations.
\textit{Journal of Computational Physics}, {\tt 91}: 110-131, 1990.

\bibitem{Lewis}A. Lewis.
A simple option formula for general jump-diffusion and other exponential L\'evy processes, 
Available at \href{http://optioncity.net/pubs/ExpLevy.pdf}{optioncity.net/pubs/ExpLevy.pdf}, 2001.

\bibitem{Mandelbrot}B.~Mandelbrot and J.~Van Ness.
Fractional Brownian motions, fractional noises and applications.
\textit{SIAM Review}, {\tt 10}(4): 422-437, 1968.

\bibitem{MP17} R.~McCrickerd and M.S.~Pakkanen. 
Turbocharging Monte Carlo pricing for the rough Bergomi model. 
\textit{Quantitative Finance}, {\tt 18}(11), 1877-1886, 2018.

\bibitem{GheeSABR}W. McGhee.
An artificial neural network representation of the SABR stochastic volatility model.
\textit{Journal of Computational Finance}, {\tt 25}(2), 2021.

\bibitem{Nam}Q. Chan-Wai-Nam, J. Mikael and X. Warin. 
Machine Learning for semi linear PDEs
\textit{Journal of Scientific Computing}, {\tt 79}(3): 1667-1712, 2019.

\bibitem{Parczewski}P. Parczewski. 
Donsker-type theorems for correlated geometric fractional Brownian motions and related processes. \textit{Electronic Communications in Probability}, {\tt 22}(55): 1-13, 2017.

\bibitem{Peng} S.~Peng and M. Xu.
Numerical algorithms for backward stochastic differential equations with 1-d Brownian motion: Convergence and simulations.
\textit{ESAIM: Mathematical Modelling and Numerical Analysis}, {\tt 45}(2): 335-360, 2011.

\bibitem{Picard} J.~Picard.
Representation formulae for the fractional Brownian motion.
\textit{S\'eminaire de Probabilit\'es}, {\tt 43}: 3-70, 2011.

\bibitem{RenTan}Z. Ren and X. Tan.
On the convergence of monotone schemes for path-dependent PDEs.
\textit{Stochastic Processes and their Applications}, {\tt 127}(6): 1738-1762, 2017.

\bibitem{RenTZ}Z.~Ren, N.~Touzi and J.~Zhang, 
An overview of viscosity solutions of path-dependent PDEs.
\textit{Stochastic Analysis and Applications}.
Springer Proceedings in Mathematics and Statistics, {\tt 100}: 397-454, 2014.

\bibitem{Romano-Touzi}M.~Romano and N.~ Touzi. 
Contingent claims and market completeness in a stochastic volatility model. 
\textit{Mathematical Finance}, {\tt 7}(4): 399-412, 1997.

\bibitem{Szpruch}M. Sabate Vidales, D. \v{S}i\v{s}ka and L. Szpruch.
Unbiased deep solvers for parametric PDEs.
\href{https://arxiv.org/abs/1810.05094}{arXiv:1810.05094}, 2019.

\bibitem{SpilioDGM}J. Sirignano and K. Spiliopoulos.
DGM: A deep learning algorithm for solving partial differential equations.
\textit{Journal of Computational Physics}, {\tt 375}: 1339-1364, 2018

\bibitem{Stone}H. Stone.
Calibrating rough volatility models: a convolutional neural network approach.
\textit{Quantitative Finance}, {\tt 20}(3):  379-392, 2020.

\bibitem{Viens}F. Viens and J. Zhang.
A martingale approach for fractional Brownian motions and related path-dependent PDEs.
\textit{Annals of Applied Probability}, {\tt 29}(6): 3489-3540, 2019.

\bibitem{Zhang}J. Zhang and J. Zhuo.
Monotone schemes for fully nonlinear parabolic path dependent PDEs.
\textit{Journal Fin. Eng.}, {\tt 1}, 2014.
\end{thebibliography}
\end{document}